\tikzstyle{intv}=[fill=solid_silver!25,draw=solid_silver!25,opacity=0.5]
\tikzstyle{rot}=[color=ptv_red,ultra thick]
\tikzstyle{blau}=[color=ptv_blue,ultra thick]
\tikzstyle{gruen}=[color=ptv_green,ultra thick]
\tikzstyle{grau}=[color=solid_silver,ultra thick]
\definecolor{ptv_red}{RGB}{224,33,41}
\definecolor{solid_silver}{RGB}{91,91,90}
\definecolor{bright_silver}{RGB}{188,189,192}
\definecolor{ptv_blue}{RGB}{0,172,205}
\title{Efficient Route Planning with Temporary Driving Bans, Road Closures, and Rated Parking Areas} 
\titlerunning{Efficient Route Planning with Temporary Driving Restrictions}
\author{Alexander Kleff}{PTV Group, Karlsruhe, Germany}{alexander.kleff@ptvgroup.com}{https://orcid.org/0000-0001-5609-944X}{}
\author{Frank Schulz}{PTV Group, Karlsruhe, Germany}{frank.schulz@ptvgroup.com}{https://orcid.org/0000-0002-3144-8479}{}
\author{Jakob Wagenblatt}{Karlsruhe Institute of Technology, Germany}{jakob.wagenblatt@student.kit.edu}{https://orcid.org/0000-0001-7045-0566}{}
\author{Tim Zeitz}{Karlsruhe Institute of Technology, Germany}{tim.zeitz@kit.edu}{https://orcid.org/0000-0003-4746-3582}{}
\authorrunning{A. Kleff, F. Schulz, J. Wagenblatt, and T. Zeitz}
\keywords{driving bans, realistic road networks, route planning, shortest paths}
\DeclareMathAlphabet{\mathcalligra}{T1}{calligra}{m}{n}
\newcommand*{\NP}{\textsl{NP}}
\newcommand*{\PI}{\mathbb{N}}
\newcommand*{\NNI}{\mathbb{N}_0}
\newcommand*{\RI}{\mathbb{Q}_{\ge 0}}
\newcommand*{\RIU}{\mathbb{Q}_{\ge 0}\cup \{\infty\}}
\newcommand*{\C}{\mathcal{C}}
\newcommand*{\PH}{H}
\newcommand*{\T}{\mathcal{T}}
\newcommand*{\source}{\ensuremath{s}}
\newcommand*{\target}{\ensuremath{z}}
\newcommand*{\atime}{t}
\newcommand*{\aperiod}{p}
\newcommand*{\TW}{\Phi}
\newcommand*{\vertices}{V}
\newcommand*{\vertexa}{u}
\newcommand*{\vertexb}{v}
\newcommand*{\edges}{E}
\newcommand*{\edge}{e}
\newcommand*{\drive}{\delta}
\newcommand*{\maxCat}{r}
\newcommand*{\categ}{\rho}
\newcommand*{\cdriv}{d}
\newcommand*{\cwait}[1][0]{w_{#1}}
\newcommand*{\pth}{R}
\newcommand*{\arrival}{A}
\newcommand*{\depart}{D}
\newcommand*{\len}{\ell}
\newcommand*{\Hmin}{t^{min}}
\newcommand*{\Hmax}{t^{max}}
\newcommand*{\twa}{t^{closed}}
\newcommand*{\two}{t^{open}}
\newcommand*{\update}{t^{visit}}
\begin{document}

\maketitle

\begin{abstract}
We study the problem of planning routes in road networks when certain streets or areas are closed at certain times.
For heavy vehicles, such areas may be very large since many European countries impose temporary driving bans during the night or on weekends.
In this setting, feasible routes may require waiting at parking areas,
and several feasible routes with different trade-offs between waiting and driving detours around closed areas may exist. 
We propose a novel model in which
driving and waiting are assigned abstract costs, and waiting costs are location-dependent to reflect the different quality of the parking areas.
Our goal is to find Pareto-optimal routes with regards to arrival time at the destination and total cost.
We investigate the complexity of the model and determine a necessary constraint on the cost parameters such that the problem is solvable in polynomial time.
We present a thoroughly engineered implementation and perform experiments on a production-grade real world data set.
The experiments show that our implementation can answer realistic queries in around a second or less which makes it feasible for practical application.

\end{abstract}

\section{Introduction}
\label{sec:intro}

Many European countries impose temporary driving bans for heavy vehicles.
Driving may be restricted during the night, on weekends, and on public holidays.
Such bans may apply to the whole road network of a country or parts of it.
When routing a heavy vehicle from a source to a destination, it is crucial to take these temporary driving bans into account.
But it is not only about heavy vehicles.
Temporary closures of bridges, tunnels, border crossings, mountain pass roads, or certain inner-city areas as well as closures due to roadworks may affect all road users alike.
In case of road space rationing in cities, the driving restriction may depend on the license plate number.
To sum up, temporary driving restrictions exist in different forms, and the closing and re-opening times of a road segment must be considered in the route planning.

As a consequence of temporary driving restrictions, waiting times may be inevitable and even last for hours.
During such waiting hours, the vehicle must be parked properly, and thus a suitable parking area has to be found.
The driving time of the detour from and to such a parking area should also be incorporated in the route planning.
Unfortunately, the underlying shortest (here: quickest) path problem becomes \NP-hard if waiting is only allowed at dedicated locations~\cite{or-tnp-89}.
This is because in this case, the so-called \emph{FIFO} (first in, first out) property is not satisfied, that is, the property that a driver cannot arrive earlier by departing later.
Thus, our first research question is how we can consider dedicated waiting locations without making the underlying problem \NP-hard.
It is our aim to obtain a feasible running time even for long-distance routes.

In practice, we often find that small parking areas without any facilities like public toilets or restaurants cause the least detour.
So an algorithm that looks for the shortest route, that is, a route with the shortest driving time, would select small parking areas in these cases, provided that waiting is necessary.
But the longer the waiting time is, the more vital a secure and pleasant place for waiting becomes.
So it may be important for the driver that nearby facilities of the parking area and their quality are somehow taken into account as well.
How to do this is our second research question.

In our setting, a single-criterion objective is not practical.
A driver may not always be in favor of the shortest route if that means to spend a very long time waiting and to arrive at the destination considerably later than on the quickest route, that is, a route with the earliest arrival at the destination.
Conversely, a driver may not always be interested in a quickest route if that route means to take an unjustified long detour around temporarily closed road segments that could be avoided by waiting in a comfortable place.
In other words, an early arrival at the destination (and thus low opportunity costs), little driving time (and thus low fuel costs), and pleasant waiting conditions (and thus high driver satisfaction) are competing criteria.
Solutions can differ significantly with regards to these criteria.
How to deal with this and find reasonable routes is the third research question.

In this paper, we answer these questions as follows:
\begin{enumerate}
	\item We present a model in which waiting is allowed at any vertex and any edge at any time in the road graph but waiting on edges and waiting on those vertices that do not correspond to parking areas is penalized.
	This is done by assigning a cost to time spent waiting there.
	Since driving comes at a price, too, we also assign a cost per time unit spent driving.
	As we will show, we can find a route with least costs in polynomial time if both cost parameters are set to the same value.
	\item We assume that the nearby facilities of a parking area and their quality can be expressed by some single rating number.
	To take account of this, we assign a waiting cost to every corresponding vertex as well.
	This cost is lower than the cost of waiting anywhere else in the road graph, and it is even lower the higher the rating of the parking area is.
	\item We return routes that are Pareto-optimal with regards to arrival time at the destination on the one hand and total costs on the other.
	Despite the potentially larger output, our algorithm still runs in polynomial time under the same condition as before.
\end{enumerate}

As our experiments reveal, many queries within Europe are answered within milliseconds.
Except some pathological cases, even more complex queries with four or more Pareto-optimal solutions are solved in less than a second.

\subparagraph{Related Work}

Many route planning problems are modeled as shortest path problems.
To this day, the theoretically fastest known algorithm to find shortest paths on graphs with static non-negative edge weights is the algorithm of Dijkstra~\cite{d-ntpcg-59}.
However, for many practical applications, it is not fast enough.
One approach to speed up the computation is to reduce the search space of Dijkstra's algorithm by guiding the search towards the destination by means of estimates of the remaining distance to the destination.
It is known as the A* algorithm~\cite{hnr-afbhd-68}.
Since the advent of routing services, a lot of research has been done on efficient algorithms for routing in road networks.
Routing services have to answer many queries on the same network.
This can be used to speed up shortest path queries through precomputed auxiliary data.
Many approaches exploit certain characteristics of road networks, for example the hierarchical structure (freeways are more important than rural roads).
For an extensive overview, we refer to~\cite{bdgmpsww-rptn-16}.
One particularly popular speed-up technique are Contraction Hierarchies~\cite{gssv-erlrn-12}.
During preprocessing, additional shortcut edges are inserted into the graph, which skip over unimportant vertices.
This preprocessing typically takes a few minutes.
Then, shortest path queries can be answered in less than a millisecond.

A natural approach to handle driving restrictions is to model them as time-dependent travel times~\cite{d-aassp-69}.
For the blocked time, the travel time of the edge can be set to infinity.
Time-dependent route planning has also received some attention and effective speed-up techniques are known~\cite{bgsv-mtdtt-13,bdpw-dtdrp-16,dn-crdtd-12,d-tdsr-11,ndls-bastd-12}.

Variants of our problem have been studied in the literature.
In~\cite{desaulniers2000shortest} a related problem is discussed where nodes (not edges) have time windows and waiting is associated with a cost.
In~\cite{pugliese2013survey} an overview is given over different exact approaches to solving shortest path problems with resource constraints.
Time windows on nodes are a specific kind of constraint in this framework.
More specialized models for routing applications have been proposed.
The authors of~\cite{twb-rpbtd-18} study the problem of planning a single break, considering driving restrictions and provisions on driver breaks.
They aim to find only the route with the earliest arrival.

\subparagraph{Contribution}

We present a novel model that helps answer our three research questions in the context of temporary driving restrictions and dedicated waiting locations.
To the best of our knowledge, this is the first unifying approach that gives answers to all three research questions.
Our theoretical analysis reveals that our model can be solved to optimality in polynomial time, given certain restrictions on the parameterization.
The experimental evaluation of our implementation demonstrates a practical running time.

\subparagraph{Outline}

In \cref{sec:problem}, we give a formal definition of the routing problem at hand.
In \cref{sec:algorithm}, we present an exact algorithm for this problem.
In \cref{sec:analysis}, we analyze the complexity of the problem and show that our algorithm runs in polynomial time if the costs for driving are the same as for waiting anywhere else than at a dedicated waiting location.
In \cref{sec:impl}, we describe techniques to speed-up the computation.
In \cref{sec:exp}, we present the main results of our experiments.
Finally, we conclude in \cref{sec:conclusion}.

\section{Problem}
\label{sec:problem}

A problem instance comprises a \emph{road graph with ban intervals on edges, driving costs and location-dependent waiting costs} (or \emph{road graph with ban intervals and costs} for short) as well as a set of \emph{queries}.
The road graph is characterized by the following attributes:

\begin{itemize}
	\item A set $\vertices$ of $n$ vertices and a set $\edges$ of $m$ directed edges.
	\item A mapping $\TW$ that maps each edge $\edge \in \edges$ to a sequence of disjoint time intervals, where the edge is considered to be \emph{closed} during each interval.
	Precisely, for any \emph{ban interval} $[\twa,\two) \in \TW(\edge)$ of an edge $\edge$, $\two$ denotes the first point in time after $\twa$ where the edge is open again.
	Here and in the following, all points in time are integers and the length of an interval is denoted by $|[\twa,\two)|$ and equals $\two - \twa > 0$.
	During such a time span, a vehicle on the corresponding road segment must not move.
  We denote the total number of ban intervals as $b$.
	\item A mapping $\drive: \edges \rightarrow \PI$ that maps each edge $\edge:=(u,v)\in \edges$ to the time $\drive(\edge)$ that it takes to drive from $u$ to $v$, provided the edge is open.
	\item A mapping $\categ$ that maps each vertex to a rating in $\{0,1,\ldots,\maxCat\}$ with $\maxCat \le n$.
	Rating 0 means \emph{unrated}, that is, it is assumed that it is highly difficult, dangerous, and not allowed to park the vehicle there.
	In contrast to an unrated location, we call a vertex $v$ with $\categ(v) > 0$ a \emph{parking location}.
	\item A parameter set of abstract costs, consisting of $\cdriv \in \RI$, the cost per unit of driving time, and $\cwait[i] \in \RI$ for all $i$ from 0 to $\maxCat$, the cost per time unit of waiting on a vertex with rating~$i$.
	Edges are always unrated so waiting there costs $\cwait[0]$ per time unit.
	W.l.o.g. $\cwait[i] < \cwait[i-1]$ holds for all $i$ between 1 and $\maxCat$, that is, we assume that waiting on vertices with a higher rating costs less than waiting on those with a lower rating.
\end{itemize}

A \emph{$\vertexa$-$\vertexb$-route} is a triple $(\pth, \arrival,\depart)$ of three sequences of the same length $\len:=|\pth|=|\arrival|=|\depart|$.
Here, $\pth$ is the sequence of vertices along the route.
It describes a (not necessarily simple) \emph{path} in the graph that starts at $\vertexa$ and ends in $\vertexb$, that is,
$\edge_i:=(\pth[i],\pth[i+1])\in \edges$ for all $1 \le i< \len$ and $\pth[1]=\vertexa$ and $\pth[\len]=\vertexb$.
The other two sequences $\arrival$ and $\depart$ denote the \emph{arrival times} and the \emph{departure times} from the respective vertices,
where $\arrival[i] \le \depart[i]$ for all $1\le i \le \ell$ and $\arrival[{i+1}] - \depart[i] \ge \drive(\edge_i)$ for all $1\le i < \ell$ holds.

A query comprises a \emph{source} $\source \in \vertices$ and a \emph{destination} $\target \in \vertices$ as well as a \emph{planning horizon} $\PH$.
The latter is defined as the time interval between an \emph{earliest departure time} $\Hmin$ from $\source$ and a \emph{latest arrival time} $\Hmax$ at $\target$.
	Waiting costs arise as soon as the planning horizon opens.
For a given query, we look for \emph{feasible} $\source$-$\target$-routes.
A route is feasible with respect to the planning horizon if $\arrival[1]=\Hmin$ and $\depart[\len]\le \Hmax$.
In addition, ban intervals must be taken account of.
Let $T_i:=[\depart[{i}],\arrival[{i+1}])$ be the time interval in which the edge $\edge_i:=(\pth[i],\pth[{i+1}])$ of the route's path is traversed.
A route is feasible with respect to the ban intervals if $\sum_{I \in \TW(\edge_i)} | T_i \cap I | \le |T_i| - \drive(\edge_i)$ for all $1\le i < \ell$.
Here, $\sum_{I \in \TW(\edge_i)} | T_i \cap I |$ is the time during which the edge between $\pth[i]$ and $\pth[{i+1}]$ is closed while the edge is being traversed.

Let \emph{travel time} include driving time and waiting time.
The \emph{travel time costs} of a route are the sum of the waiting time costs and the driving time costs.
So given a route of length $\ell$, the travel time costs are
	\[\sum_{i=1}^{\ell} \cwait[{\categ(\pth[i])}] \cdot \left(\depart[i] - \arrival[i]\right) + \sum_{i=1}^{\ell-1} \cwait \cdot \left(\arrival[{i+1}] - \depart[i] - \drive(\edge_i)\right) + \cdriv \cdot \drive(\edge_i),
\]
where we use $\edge_i:=(\pth[i],\pth[{i+1}])$.
We say an $\source$-$\target$-route is \emph{Pareto-optimal} (or simply \emph{optimal}) if it is feasible and if its travel time costs are less or its arrival time at $\target$ is earlier or equality holds in both cases compared to any other feasible $\source$-$\target$-route.
For a query, the objective is to find a maximal set of (Pareto-)optimal $\source$-$\target$-routes such that no two routes in the set have both the same arrival time at~$\target$ and the same travel time costs.

\section{Algorithm}
\label{sec:algorithm}

The algorithm maintains a priority queue.
Each entry of the queue consists of a vertex and a point in time within the planning horizon as key.
We say a vertex is \emph{visited} at a certain point in time whenever we remove the top entry from the queue, that is, an entry with the earliest time among the entries in the queue.
At every vertex $v \in \vertices$, we store a time-dependent function $\C_v: \PH \rightarrow \RIU$.
It maps a point in time $\atime$ within the planning horizon $\PH$ to an upper bound on the minimum travel time cost over all $\source$-$v$-routes that end in $v$ at time $\atime$.
We call this function \emph{cost profile} of $v$ or, more general, \emph{label} of $v$.
The algorithm works in a \emph{label correcting} manner in the sense that a vertex may be visited multiple times, albeit at different times within the planning horizon.

Before we describe the phases of the algorithm in greater detail, we introduce an auxiliary time-dependent function~$\T_\edge$ for every edge $\edge \in \edges$.
It maps a time $\atime$ at the head $\vertexb$ of an edge $\edge:=(\vertexa,\vertexb)$ to the \emph{shortest travel time} that it takes to traverse the edge from $\vertexa$ to $\vertexb$ completely and be at $\vertexb$ at time $\atime$, possibly including waiting time.
That is, for a time $\atime$ at $\vertexb$, $\T_\edge(\atime)$ is the minimum period~$\aperiod$ such that $\aperiod - \sum_{I \in \TW(\edge)} | [\atime-\aperiod,\atime) \cap I | \ge \drive(\edge)$ holds if such a $\aperiod$ exists, and $\infty$ otherwise.
In other words, $\atime - \T_\edge(\atime)$ is the latest departure time from $\vertexa$ in order not to arrive at $\vertexb$ later than at time $\atime$.
An example is given in \cref{fig:exampleTravelTimeFunction}.

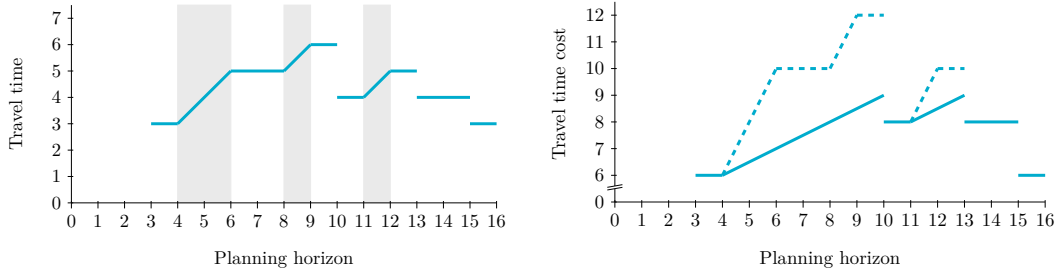
\begin{figure}
     \begin{subfigure}[b]{0.49\textwidth}
          \centering
          \resizebox{\linewidth}{!}{\begin{tikzpicture}[y=.5cm, x=.5cm]

    \draw[intv] (4,7.5) rectangle (6,0);
    \draw[intv] (8,7.5) rectangle (9,0);
    \draw[intv] (11,7.5) rectangle (12,0);

    \draw (0,0) -- coordinate (x axis mid) (16,0);
    \draw (0,0) -- coordinate (y axis mid) (0,7.5);
    
    \foreach \x in {0,...,16}
        \draw (\x,1pt) -- (\x,-3pt)
        node[anchor=north] {\x};

    \foreach \y in {0,...,7}
        \draw (1pt,\y) -- (-2pt,\y)
        node[anchor=east] {\y};

    \node[below=0.8cm] at (x axis mid) {Planning horizon};
    \node[rotate=90, above=0.8cm] at (y axis mid) {Travel time};

	\draw[blau] (3,3) -- (4,3);
	\draw[blau] (4,3) -- (6,5);
	\draw[blau] (6,5) -- (8,5);
	\draw[blau] (8,5) -- (9,6);
	\draw[blau] (9,6) -- (10,6);
	\draw[blau] (10,4) -- (11,4);
	\draw[blau] (11,4) -- (12,5);
	\draw[blau] (12,5) -- (13,5);
	\draw[blau] (13,4) -- (15,4);
	\draw[blau] (15,3) -- (16,3);

\end{tikzpicture}  }
          \caption{Travel time function $\T_\edge$ of an edge $\edge$ with ban intervals (grey) and a driving time $\drive(\edge)$ of 3. The latest departure to be at $\vertexb$ at time $t$ is $\atime-\T_\edge(\atime)$.}
          \label{fig:exampleTravelTimeFunction}
     \end{subfigure}
     \hspace{1mm}
     \begin{subfigure}[b]{0.49\textwidth}
          \centering
          \resizebox{\linewidth}{!}{\begin{tikzpicture}[y=.5cm, x=.5cm]

    \draw (0,0) -- coordinate (x axis mid) (16,0);

    \draw (0,0) -- coordinate (y axis mid) (0,0.5);
    \draw (0,0.6) -- coordinate (y axis mid) (0,7.5);
    \draw (-4pt,0.45) -- (4pt,0.55);
    \draw (-4pt,0.55) -- (4pt,0.65);

    \foreach \x in {0,...,16}
        \draw (\x,1pt) -- (\x,-3pt)
        node[anchor=north] {\x};

    \draw (1pt,0) -- (-2pt,0)
    node[anchor=east] {0};

    \foreach \y in {6,...,12}
        \draw (1pt,\y-5) -- (-2pt,\y-5)
        node[anchor=east] {\y};

    \node[below=0.8cm] at (x axis mid) {Planning horizon};
    \node[rotate=90, above=0.8cm] at (y axis mid) {Travel time cost};

	\draw[blau] (3,1) -- (4,1);
	\draw[blau, dashed] (4,1) -- (6,5);
	\draw[blau] (4,1) -- (10,4);
	\draw[blau, dashed] (6,5) -- (8,5);
	\draw[blau, dashed] (8,5) -- (9,7);
	\draw[blau, dashed] (9,7) -- (10,7);
	\draw[blau] (10,3) -- (11,3);
	\draw[blau] (11,3) -- (13,4);
	\draw[blau, dashed] (11,3) -- (12,5);
	\draw[blau, dashed] (12,5) -- (13,5);
	\draw[blau] (13,3) -- (15,3);
	\draw[blau] (15,1) -- (16,1);

\end{tikzpicture}}
          \caption{Cost profile of vertex $\vertexb$ after linking, that is, after considering travel time (dashed) and waiting time at $\vertexb$ (solid).}
          \label{fig:exampleCostProfileNeighbor}
     \end{subfigure}
		\caption{Computing the cost profile of a vertex $\vertexb$.
		Let $\vertexb$ be adjacent to the source $\source$ via an edge $\edge:=(\source,\vertexb)$ with three ban intervals and a driving time $\drive(\edge)$ of 3.
		The corresponding travel time function is given in \cref{fig:exampleTravelTimeFunction}.
		It is infinite between $0=\Hmin$ and $3=\drive(\edge)$.
		In \cref{fig:exampleCostProfileNeighbor}, we see the cost profile $\C_{\vertexb}$ after considering the travel time along the edge (dashed) and after considering waiting at $\vertexb$ (solid).
		Here, the assumed cost parameters are $\cwait[{\categ(\source)}] = 0$, $\cwait[{\categ(\vertexb)}] = 0.5$, and $\cdriv = \cwait = 2$, where $\cwait[{\categ(\source)}] = 0$ implies that the cost profile $\C_{\source}$ at the source is 0 over the whole planning horizon.
		}
		\label{fig:exampleAlgorithm}
\end{figure}

In the initialization phase of the algorithm, we set $\C_\source(t) := \cwait[{\categ(\source)}] \cdot (t-\Hmin)$ for all $t \in \PH$.
For every other $v\in \vertices \setminus \{\source\}$, we set $\C_v(t) := \infty$ for all $t \in \PH$.
Furthermore, we insert the source $\source$ with key $\Hmin$ into the priority queue.

As long as the queue is not empty, we are in the main loop of the algorithm.
In every iteration of the main loop, we remove the top entry from the queue.
Let us suppose we visit a vertex $\vertexa$ at time $\update \ge \Hmin$.
Then, we check for every edge $e:=(\vertexa,\vertexb)$ going out of $\vertexa$ whether we can improve the cost profile $\C_\vertexb$ of $\vertexb$.
We do so in three steps.
In the first step, we consider the travel time along the edge and set
\begin{equation}
    \label{eqn:step1}
    \C'_v(t) := \C_u(t - \T_e(t)) + \cdriv \cdot \drive(e) + \cwait \cdot (\T_e(t) - \drive(e))
\end{equation}
for all $t$ with $\update + \T_e(t) \le t \le \Hmax$.
For all other $t\in \PH$ we set $\C'_v(t) := \infty$.
In the second step, we consider waiting at $v$ at cost $\cwait[{\categ(v)}]$ per time unit and set
\begin{equation}
    \label{eqn:step2}
    \C'_v(t) := \min \{\C'_v(t') + \cwait[{\categ(v)}] \cdot (t-t') \mid \Hmin \le t'\le t \}
\end{equation}
for all $t \in \PH$.
An example of the first two steps is illustrated in \cref{fig:exampleCostProfileNeighbor}.
Finally, in the third step, we compare $\C'_v$ and $\C_v$.
Let $t^*$ be the earliest point in time such that $\C'_v(t^*)$ is less than $\C_v(t^*)$ if such a time $t^*$ exists.
Only if it exists, we set $\C_v(t)$ to the minimum of $\C_v(t)$ and $\C'_v(t)$ for all $t^* \le t \le \Hmax$.
Furthermore, we insert vertex $v$ with key $t^*$ into the priority queue or decrease the key if $v$ is already contained.

When the priority queue is empty, we enter the finalization phase of the algorithm.
We say a time-cost-pair $(t,\C_\target(t))$ with $t \in \PH$ and $\C_\target(t) < \infty$ is Pareto-optimal if there is no time~$t'$ with $\Hmin \le t' < t$ and $\C_\target(t') \le \C_\target(t)$.
In the finalization phase, we extract an $\source$-$\target$-route for every Pareto-optimal time-cost-pair.
So let such a time-cost-pair $(t,\C_\target(t))$ be given.
In order to find a corresponding route $(\pth, \arrival,\depart)$, we initially push $\target$ and $t$ and $t$ to the front of the (empty) sequences $\pth$ and $\arrival$ and $\depart$, respectively.
The following is done iteratively until we reach the source, that is, $\pth[1]=\source$ holds.
First, we look for an incoming edge $e:=(u,\pth[1])$ of $\pth[1]$ and a departure time $t$ from $u$ with
	\[\C_u(t) + \cdriv \cdot \drive(e) + \cwait \cdot (\T_e(\arrival[1]) - \drive(e)) = \C_{\pth[1]}(\arrival[1])
	\]
which must exist.
We push $u$ and $t$ to the front of $\pth$ and $\depart$, respectively.
Then, we push the earliest time $t\le \depart[1]$ such that
	\[\C_{\pth[1]}(t) + \cwait[{\categ({\pth[1]})}] \cdot (\depart[1]-t) = \C_{\pth[1]}(\depart[1])
	\]
holds to the front of the arrival time sequence $\arrival$, and continue with the next iteration.
This concludes the description of the finalization phase and thus the whole algorithm.

For the correctness of the algorithm it is important that the upper bound $\C_\vertexb(t)$ on the minimum travel time cost is tight for all $t \le \update$ and all $\vertexb \in \vertices$ whenever we visit a vertex at time $\update$.
After the main loop, it is tight for every $t\in \PH$ and all $\vertexb \in \vertices$, especially for $\target$.
This can be proven by induction on the time of visit.
The time of visiting a vertex increases monotonically because whenever a vertex is inserted into the queue or its key is decreased, the (new) value of that key can only be later than the current time of visit.

\section{Analysis}
\label{sec:analysis}

In this section, we first show the intractability of the general problem.
Then, we restrict the problem by requiring the driving cost $\cdriv$ to be equal to the
unrated waiting cost $\cwait$, and prove that our algorithm solves the restricted problem
in polynomial time.

\subparagraph{Intractability of the General Problem}

The first two theorems show the intractability of the general problem if $\cdriv \ne \cwait$.
Parking locations are not used in the proofs, so already the simplified problem without
parking locations is intractable if $\cdriv \ne \cwait$.

\begin{restatable}{theorem}{nphard}
\label{thm:np_hard}
If $\cdriv < \cwait[0]$ then it is $\NP$-complete to decide whether there is a feasible route
with travel time costs less than or equal to a given threshold $k$.
\end{restatable}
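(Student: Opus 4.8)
I would establish membership in $\NP$ and then give a polynomial-time reduction from \textsc{Subset Sum}. For membership, a feasible $\source$-$\target$-route of travel time cost at most $k$ admits a polynomial-size certificate: list the traversed edges with their arrival and departure times, but record any block of consecutive repetitions of a cycle (which a cost-minimal route may use to spend time more cheaply than waiting, exactly because $\cdriv<\cwait[0]$) once, together with a repetition count written in binary. Feasibility with respect to the planning horizon and the ban intervals and the value of the travel time costs are all computable from such a description in polynomial time.

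For the reduction, let an instance of \textsc{Subset Sum} be given by positive integers $a_1,\dots,a_n$ and a target $S$; we ask whether some $I\subseteq\{1,\dots,n\}$ has $\sum_{i\in I}a_i=S$. The idea is to encode the choice of $I$ by a chain of gadgets whose total driving time is a fixed baseline plus $\sigma:=\sum_{i\in I}a_i$, and then to place a single long ban on a final edge so that every route with $\sigma<S$ is forced to wait out the remaining $S-\sigma$ time units; since waiting is strictly more expensive than driving, such routes overshoot the budget, while a tight latest arrival time forbids $\sigma>S$. Concretely, I would take vertices $\source=v_0,v_1,\dots,v_n$, one fresh vertex $w_i$ per index $i$, and $\target$. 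Gadget $i$ offers a ``$0$-choice'', the edge $(v_{i-1},v_i)$ with $\drive=1$, and an ``$a_i$-choice'', the two edges $(v_{i-1},w_i)$ with $\drive=1$ and $(w_i,v_i)$ with $\drive=a_i$; the $a_i$-choice thus costs $a_i$ more driving time than the $0$-choice. Finally add the edge $\edge^\ast:=(v_n,\target)$ with $\drive=1$ and the single ban interval $[\Hmin,\Hmin+n+S)$. Set $\Hmax:=\Hmin+n+S+1$ and the threshold $k:=\cdriv\cdot(n+S+1)$; every vertex is unrated. This instance is constructed in polynomial time, the only large numbers ($\Hmax$, $k$, the ban endpoint) being polynomial in the input size.

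Correctness would then follow from three observations. First, the graph is acyclic, so every $\source$-$\target$-route is a simple path that in each gadget makes the $0$-choice or the $a_i$-choice and then uses $\edge^\ast$; if $I$ collects the $a_i$-choices and $\sigma=\sum_{i\in I}a_i$, its total driving time is exactly $n+\sigma+1$. Second, because $\edge^\ast$ is closed until time $\Hmin+n+S$, a route with $\sigma<S$ must accumulate total waiting time $W\ge S-\sigma$ (anywhere, on a vertex or on $\edge^\ast$ itself, which does not affect the cost since all vertices are unrated) and it reaches $\target$ at time $\Hmin+n+\sigma+1+W$; feasibility with respect to $\Hmax$ then forces $W\le S-\sigma$, so feasible routes are exactly those with $\sigma\le S$, each of travel time cost $\cdriv(n+\sigma+1)+\cwait[0](S-\sigma)$. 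Third, this cost equals $k+(\cwait[0]-\cdriv)(S-\sigma)$, which by $\cdriv<\cwait[0]$ is at most $k$ if and only if $\sigma=S$; hence a feasible route of cost at most $k$ exists iff the \textsc{Subset Sum} instance is a yes-instance. The step I expect to be the main obstacle is the second observation: one has to argue, against the precise feasibility condition $\sum_{J\in\TW(\edge^\ast)}|T\cap J|\le|T|-\drive(\edge^\ast)$, that no schedule — in particular no clever distribution of waiting across several vertices and the edge — clears the ban with less than $S-\sigma$ total waiting, and to verify the boundary cases $\sigma=S$ and $\sigma>S$ carefully; the second delicate point is phrasing the $\NP$-membership certificate so that it stays polynomial even for routes that substitute cheap cycles for waiting.
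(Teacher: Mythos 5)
Your proposal is correct and follows essentially the same strategy as the paper's proof: a chain of binary "take it or skip it" gadgets encoding a subset choice in the total driving time, a single final edge banned until a fixed time so that any shortfall must be made up by waiting, and a cost threshold of $\cdriv$ times the minimum possible elapsed time, so that $\cdriv<\cwait[0]$ makes any positive waiting blow the budget (the paper reduces from PARTITION with parallel edges of driving time $2x_i$ versus $0$, which is only a cosmetic difference from your Subset Sum gadgets). Your extra care about the $\NP$-membership certificate (run-length encoding of cycles used in lieu of waiting) addresses a point the paper's one-sentence membership argument glosses over, and the "delicate" second observation you flag goes through immediately because total elapsed time equals driving plus waiting regardless of where the waiting is placed and all locations are unrated.
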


We prove this theorem in Appendix~\ref{sec:appendixA} by reduction from PARTITION.

\begin{restatable}{theorem}{exproutes}
\label{thm:exp_routes}
If $\cdriv > \cwait[0]$  then the number of Pareto-optimal routes can be exponential in the number of vertices.
\end{restatable}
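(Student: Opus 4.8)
The plan is to construct, for every $k\in\PI$, an instance with $\Theta(k)$ vertices and a single query admitting $2^k$ Pareto-optimal $\source$-$\target$-routes with pairwise distinct time-cost pairs. I would use a chain of $k$ \emph{gadgets}. The vertices are $\source=v_1,v_2,\dots,v_{k+1}=\target$ together with helper vertices $x_1,\dots,x_k$, and between $v_i$ and $v_{i+1}$ there are exactly two internally disjoint ways through: a \emph{fast, expensive} detour $A_i$ via $x_i$, consisting of two open edges $(v_i,x_i)$, $(x_i,v_{i+1})$ whose driving times sum to $g_i:=2^i$; and a \emph{slow, cheap} direct edge $B_i:=(v_i,v_{i+1})$ with driving time $h_i:=2^{i-1}$ carrying a single ban interval $[\twa_i,\two_i)$ of length $\beta_i:=\two_i-\twa_i=2^i$. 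I set $\cdriv:=3$ and $\cwait:=1$ (so $\cdriv>\cwait$; the general case $\cdriv>\cwait$ is analogous with longer detours), put $\Hmin:=0$ and $\Hmax$ large enough for all routes constructed below, and make every vertex unrated, so the costs $\cwait[i]$ for $i\ge1$ are irrelevant. Because $\cdriv>\cwait$, driving through $A_i$ costs more per time unit than sitting out the ban on $B_i$, while $B_i$ is slower since $h_i+\beta_i>g_i$; this is the trade-off the gadget offers.

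The crux is the placement of the bans. Let $t_i^{\min}:=\sum_{j<i}g_j$ and $t_i^{\max}:=\sum_{j<i}(h_j+\beta_j)$ be the earliest and latest times at which a route that never waits voluntarily can reach $v_i$. I set $\twa_i:=t_i^{\max}$ and $\two_i:=\twa_i+\beta_i$. With the chosen numbers one checks $t_i^{\max}-t_i^{\min}=2^{i-1}-1$ and hence $[t_i^{\min},t_i^{\max}]=[\twa_i-h_i+1,\,\twa_i]$. Therefore, whenever a route reaches $v_i$ at any time $t$ in that window and then takes $B_i$, it cannot finish the $h_i$ driving units before $\twa_i$, so it is forced to sit through the entire ban; inspecting $\T_{B_i}$ then shows that the unique optimal traversal of $B_i$ (simultaneously earliest arrival and least cost) arrives at $v_{i+1}$ at time $t+h_i+\beta_i$ at cost $\cdriv h_i+\cwait\beta_i$, \emph{independently of $t$ in the window}. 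Taking $A_i$ instead gives arrival $t+g_i$ at cost $\cdriv g_i$. Thus, as a function of the single bit "$A_i$ or $B_i$'', gadget $i$ adds to the arrival time either $0$ or $a_i:=\beta_i-(g_i-h_i)=2^{i-1}$, and subtracts from the cost either $0$ or $c_i:=\cdriv(g_i-h_i)-\cwait\beta_i=(\cdriv-2\cwait)\,2^{i-1}=2^{i-1}$.

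I would then show that the only Pareto-optimal routes are the $2^k$ "canonical'' ones: for each $S\subseteq\{1,\dots,k\}$, use $B_i$ exactly on the gadgets $i\in S$ and wait only as the bans force. Since no vertex is a parking location, the cost of any feasible route telescopes to $\cwait\cdot(a-\Hmin)+(\cdriv-\cwait)\cdot\Delta$, where $a$ is the arrival time at $\target$ (waiting at $\target$ is never beneficial) and $\Delta$ is the total driving time; moreover $\Delta$ depends only on the sequence of edges used, and voluntary waiting (at a vertex or on an edge) as well as revisiting vertices strictly increases both $a$ and $\Delta$, hence the cost. So a Pareto-optimal route is simple and attains the minimum arrival for its edge sequence, which by the gadget analysis is precisely the canonical route for the corresponding $S$. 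Writing $N_S:=\sum_{i\in S}2^{i-1}$, the canonical route for $S$ has arrival $\sum_i g_i+N_S$ and cost $\cdriv\sum_i g_i-N_S$; the $2^k$ values $N_S$ are pairwise distinct, and raising $N_S$ strictly increases the arrival time while strictly decreasing the cost, so no two canonical routes dominate each other. Hence there are $2^k$ Pareto-optimal routes with pairwise distinct time-cost pairs, which is exponential in $n=\Theta(k)$.

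The main obstacle is exactly this middle step: excluding "clever'' feasible routes. One must verify that a route cannot profit from waiting voluntarily before a ban so as to pass through once the edge reopens (intuitively such waiting costs at least as much as the ban it avoids, while also delaying arrival), and that the ban placement truly forces the same constant wait $\beta_i$ for \emph{every} admissible entry time into $v_i$ and \emph{every} combination of choices in the preceding gadgets. The identity $\text{cost}=\cwait\cdot(a-\Hmin)+(\cdriv-\cwait)\Delta$ together with the case distinction on $\T_{B_i}$ over $[t_i^{\min},t_i^{\max}]$ is what makes this clean; I expect the bulk of the write-up to be the arithmetic confirming $[t_i^{\min},t_i^{\max}]=[\twa_i-h_i+1,\twa_i]$ and the resulting closed forms for the per-gadget contributions $a_i$ and $c_i$.
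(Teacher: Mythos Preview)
Your approach is essentially the same as the paper's: a chain of $k$ binary gadgets with $2^{i}$-scaled weights, where one option is ``drive longer, arrive earlier'' and the other is ``drive less but sit through a full ban, arrive later'', yielding $2^k$ pairwise incomparable (arrival, cost) pairs. The paper's construction differs only cosmetically (parallel edges instead of helper vertices; the roles of your $A_i$/$B_i$ match its blue/red edges). Two minor points: (i) your concrete numbers give $c_i=(\cdriv-2\cwait)\,2^{i-1}$, which is positive only when $\cdriv>2\cwait$, so your ``analogous with longer detours'' remark hides real work --- the paper handles arbitrary $\cdriv>\cwait$ by scaling all driving times and ban lengths by a constant $x:=\lceil 2\cdriv/(\cdriv-\cwait)\rceil+1$; (ii) your claim that voluntary waiting ``strictly increases both $a$ and $\Delta$'' is off (it does not change the total driving time $\Delta$), though your conclusion still follows from your telescoping identity since $\cwait>0$ in your instance. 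Also note that to prove the theorem you do not need to show the canonical routes are the \emph{only} Pareto-optimal ones; showing that the $2^k$ routes are feasible and pairwise incomparable already gives the lower bound.
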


Given a number of vertices a graph with ban intervals can be constructed that
has exponentially many routes which are all Pareto-optimal. The construction
and the proof that all those routes are Pareto-optimal can be found in
Appendix~\ref{sec:appendixB}.

\subparagraph{Tractable Problem Variant}

For the remaining analysis we assume $\cdriv = \cwait$.
In the setting without parking locations,
there is only one optimal solution, since the quickest solution has also the least cost. Hence, this
setting is a single-criterion shortest path problem with time-dependent edge weights that fulfill
the \emph{FIFO} property and can be solved in polynomial time with a time-dependent variant of
Dijkstra's algorithm~\cite{d-aassp-69}, and also our algorithm reduces to
such a time-dependent Dijkstra variant and has polynomial running time.
Now we turn to the setting $\cdriv = \cwait$ with parking locations and show that it is still
tractable.

Cost profiles are piecewise linear functions.
An important aspect of our polynomial time proof is to count the non-differentiable points of the profiles.
The running time of each profile operation of our algorithm is linear in the number of non-differentiable points of the involved profiles.
These points are either \emph{convex}, \emph{concave}, or \emph{discontinuous}, meaning an environment around such a point exists in which the profile is convex or concave or discontinuous, respectively.
In a discontinuous point, a profile is always jumping down.

The non-differentiable points in the cost profiles are induced by the travel time functions.
In our example of \cref{fig:exampleTravelTimeFunction}, the convex points are $\{4,8,11\}$, the concave points are $\{6,9,12\}$, and the discontinuous points are $\{10,13,15\}$.
For a travel time function $\T_\edge$ of an edge $\edge$, we can assign a convex point $t$ to the beginning of a ban interval in $t$, a concave point $t$ to the end of a ban interval in $t$, and a discontinuous point $t$ to the end of a ban interval in $t-\T_\edge(t)$.
From this initial assignment, we can derive a ban interval assignment of the convex or discontinuous points of cost profiles.
We omit to count the number of concave points of a cost profile because every gradient of a piece must be in $\{\cwait,\ldots,\cwait[\maxCat]\}$, so the number of consecutive concave points in a cost profile is limited by $\maxCat$.

Initially, a profile $\C_{v}$ of a vertex $v$ has no convex or discontinuous points.
Such points may be introduced in the third step of an iteration of the algorithm when the auxiliary profile $\C'_{v}$ is merged into $\C_{v}$.
In the second step of an iteration, no new convex or discontinuous points can arise in $\C'_{v}$, so all such points must be created in $\C'_{v}$ in the first step.
Since $\cdriv=\cwait$, $\C'_v(t)$ is set to $\C_u(t - \T_e(t)) + \cdriv \cdot \T_e(t)$ (compare \cref{eqn:step1}) for some edge $e=(u,v)$ in this step.
If $t_v$ is a convex or discontinuous point of $\C'_v$, then $\T_e$ must be convex or discontinuous in the same point in time, or $\C_u$ must be convex or discontinuous in $t_u:=t_v - \T_e(t_v)$.
In the former case, $t_v$ inherits the assignment of the same point in time in $\T_e$, whereas in the latter case, $t_v$ inherits the assignment of $t_u$ in $\C_u$.
Since the cost profiles change during the algorithm, we do not only assign a ban interval to every convex or discontinuous point but also an iteration.
Again, in the former case, $t_v$ is assigned the current iteration, whereas in the latter case, $t_v$ inherits the iteration assignment of $t_u$ in $\C_u$.

\begin{lemma}
\label{lemma:numberOfPointsInCostProfile}
If $\cdriv = \cwait$ then a cost profile after iteration $i$ has at most $ib$ convex and at most $ib$ discontinuous points.
\end{lemma}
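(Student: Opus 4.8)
The plan is to prove this by induction on the iteration index $i$, strengthening the claim to the invariant: \emph{after iteration $i$, every cost profile $\C_v$ contains, for each $j\in\{1,\dots,i\}$, at most $b$ convex points and at most $b$ discontinuous points whose iteration label equals $j$}. Since every convex or discontinuous point of a cost profile carries a label in $\{1,\dots,i\}$, summing over $j$ then yields the claimed bounds of $ib$ convex and $ib$ discontinuous points. The base case $i=0$ is immediate: after initialization each $\C_v$ is either the affine function $t\mapsto\cwait[{\categ(\source)}]\cdot(t-\Hmin)$ or constantly $\infty$, and hence has no convex or discontinuous points at all.

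For the inductive step I would first record how the three steps of iteration $i$ --- which visits a single vertex $u$ at time $\update$ and updates $\C_v$ along each outgoing edge $e=(u,v)$ --- affect the labeled points. Step~2 creates no new convex or discontinuous points, as already observed. Step~3 replaces $\C_v$ by the pointwise minimum of $\C_v$ and $\C'_v$ on a suffix of the planning horizon; a non-differentiable point of the minimum of two piecewise-linear functions is convex or discontinuous only if it is already a convex, respectively discontinuous, point of one of the two functions (any kink newly created at a crossing of the two graphs is concave), so every convex or discontinuous point of the merged $\C_v$ keeps its ban-interval and iteration label. Step~1 is where new points are born: since $\cdriv=\cwait$ we have $\C'_v(t)=\C_u(t-\T_e(t))+\cdriv\cdot\T_e(t)$, so a convex or discontinuous point of $\C'_v$ is either a convex or discontinuous point of $\T_e$ at the same time --- located at ban-interval starts, respectively at the points where $t-\T_e(t)$ jumps across a ban interval of $e$, hence at most $|\TW(e)|$ of each, all receiving iteration label $i$ --- or the preimage under $t\mapsto t-\T_e(t)$ of a convex or discontinuous point of $\C_u$, inheriting that point's labels. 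Here I would use that $t\mapsto t-\T_e(t)$ is non-decreasing, that $\C'_v$ is affine of slope $\cdriv$ on every interval on which $t\mapsto t-\T_e(t)$ is constant, and that the endpoints of such an interval are themselves convex, concave, or discontinuous points of $\T_e$; together these imply that each convex or discontinuous point of $\C_u$ has at most one preimage that becomes a labeled convex or discontinuous point of $\C'_v$.

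It remains to count per label. Label $i$ is easy: the only label-$i$ points this iteration can add to any profile come from the travel-time functions of out-edges of $u$, so at most $|\TW(e)|\le b$ convex and at most $|\TW(e)|\le b$ discontinuous label-$i$ points are added to a fixed $\C_v$, and no profile carried a label-$i$ point beforehand. The main obstacle is labels $j<i$: the induction hypothesis bounds the label-$j$ convex points already in $\C_v$ by $b$, while Step~1 can pass up to $b$ further label-$j$ convex points from $\C_u$ through $\C'_v$ into the merge, so the naive bound after Step~3 is $2b$ rather than $b$. Closing this gap is the technical heart of the argument, and the route I would take is to make the witness carried by an iteration-$j$ convex point precise --- such a point of $\C_u$ records a ban interval discovered in iteration $j$ together with a cost-optimal initial route fragment reaching $u$ --- and then to show that appending the edge $e$ to these fragments produces, on the lower envelope $\min(\C_v,\C'_v)$, only kinks that either coincide with label-$j$ kinks already realized in $\C_v$ or correspond to witnesses $\C_v$ had not previously realized, so that in either case the label-$j$ count of the merged profile stays at most $b$. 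The same reasoning must be carried out for discontinuous points, keeping in mind that a convex point of $\C_u$ can induce a discontinuous point of $\C'_v$ and conversely; once the per-label bound $b$ is secured for all $j\le i$, summing over $j=1,\dots,i$ completes the induction.
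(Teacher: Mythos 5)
You have the right bookkeeping in place (step~2 creates no convex or discontinuous points, the pointwise minimum in step~3 only produces concave kinks at crossings, and new points in step~1 either come from $\T_e$ or are inherited from $\C_u$), but the proof has a genuine gap exactly where you flag it: the per-label bound of $b$ for labels $j<i$ is never established, and the route you sketch for closing it does not contain the idea that makes it true. You propose that a label-$j$ point passed from $\C_u$ into $\C'_v$ either ``coincides with'' a label-$j$ kink already present in $\C_v$ or realizes a witness $\C_v$ had not yet realized. But two points carrying the same witness (same ban interval, same creating iteration) need not coincide in time at all --- they can occur at distinct times $t^1_v<t^2_v$ of the merged profile --- so this is not a dichotomy you can appeal to. Without a deduplication argument the label-$j$ counts of $\C_v$ and $\C_u$ simply add at each merge, and since origin points can reach a profile along many distinct inheritance chains over the course of the algorithm, this yields no polynomial bound at all, let alone $ib$.

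The missing ingredient is a quantitative use of $\cdriv=\cwait$. Each inheritance step adds cost exactly $\cdriv\cdot\T_e(t)$, i.e.\ cost grows at rate exactly $\cdriv$ per unit of elapsed time along the chain, so a point $t_v$ tracing back to an origin point $t_x$ created in iteration $k$ satisfies $\C_v(t_v)=\C_x(t_x)+\cdriv\cdot(t_v-t_x)$ (with $\C_x$ taken in its state after iteration $k$). If two convex (or two discontinuous) points $t^1_v<t^2_v$ of the same profile traced back to the same origin, then $\C_v(t^2_v)-\C_v(t^1_v)=\cdriv\cdot(t^2_v-t^1_v)$; since every piece of a cost profile has gradient at most $\cwait=\cdriv$ and discontinuities only jump downward, this forces the profile to be affine with gradient $\cdriv$ on all of $[t^1_v,t^2_v]$, so $t^2_v$ can be neither convex nor discontinuous --- a contradiction. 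This pins each (ban interval, iteration) pair to at most one surviving convex and one surviving discontinuous point, which is what actually delivers the $ib$ bound; without this step your induction does not close.
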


\begin{proof}
In the following, we denote the state of the profile $C_v$ after iteration $i$ by $C^i_v$.
Let $t_v$ be a convex or discontinuous point of $\C^i_{v}$ that is assigned both to an iteration $k$ and to a ban interval of some edge with head $x$.
We can follow the inheritance relation until we finally reach a convex or discontinuous point $t_x$ in $\C^k_x$.
By induction, we have $\C^i_v(t_v) = \C^k_x(t_x) + \cdriv \cdot (t_v - t_x)$.
Now suppose there are two convex or two discontinuous points $t^1_v < t^2_v$ in the profile $\C^i_{v}$ that are assigned to the same ban interval and the same iteration $k$, so they can be traced back to the same point $t_x$ in $\C^k_x$.
Then the previous observation implies that $\C^i_v(t^2_v) - \C^i_v(t^1_v) = \cdriv \cdot (t^2_v - t^1_v)$ holds, that is, the profile $\C^i_{v}$ must contain a piece with gradient $d$ that contains both $t^1_v$ and $t^2_v$.
But then $t^2_v$ can neither be convex nor discontinuous.
Hence, two convex or two discontinuous points must differ in their assigned ban interval or their assigned iteration and there can only be $ib$ discontinuous and convex points, respectively.
\end{proof}

\begin{restatable}{lemma}{iterations}
\label{lemma:iterations}
If $\cdriv = \cwait[0]$ then the total number of iterations is at most $2n(b(r+1)+1)$.
\end{restatable}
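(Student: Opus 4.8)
The plan is to bound, for each vertex $v\in\vertices$, the number of iterations in which $v$ is visited, and then to sum over the $n$ vertices. This reduction is sound because every iteration of the main loop removes exactly one queue entry, and every queue entry of a vertex $w$ — apart from the single initial entry of $\source$ — was created by the insertion or key decrease performed in the third step of an earlier iteration in which $\C_w$ was improved. I will show that each vertex is visited at most $2(b(\maxCat+1)+1)$ times.

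First I would establish that the visit times of a fixed vertex $v$ form a strictly increasing sequence $\update_1<\update_2<\cdots$: whenever $v$ is visited at time $\update$, the tightness invariant from the correctness argument guarantees that $\C_v$ is already tight on $[\Hmin,\update]$, so no auxiliary profile computed afterwards can beat $\C_v$ at any time $\le\update$; hence every subsequent re-insertion of $v$ carries a key $>\update$, and together with the global monotonicity of visit times this gives the claim. I would also record the sharper fact that, right after the visit at $\update_j$, the profile $\C_v$ has already attained its final shape on all of $[\Hmin,\update_{j+1})$: the improvement triggering the visit at $\update_{j+1}$ changes $\C_v$ only on $[\update_{j+1},\Hmax]$, and every other improvement occurring between the two visits has its earliest improved point $\ge\update_{j+1}$ as well.

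The core step is to charge each \emph{re-visit} of $v$ to a distinct discontinuous point of the \emph{final} profile $\C_v$. A re-visit at $\update_j$ (for $j\ge2$) is triggered by an improvement whose earliest improved point is exactly $\update_j$, i.e. $\C_v(\update_j)$ strictly drops while $\C_v$ stays unchanged on $[\Hmin,\update_j)$. By the previous paragraph the final $\C_v$ agrees on $[\update_{j-1},\update_j)$ with the profile right after the visit at $\update_{j-1}$, and since cost profiles only ever jump downwards, the left limit of the final $\C_v$ at $\update_j$ is at least the value $\C_v(\update_j)$ had before that improvement, which is strictly larger than its final value at $\update_j$; hence the final $\C_v$ jumps down at $\update_j$, so $\update_j$ is a discontinuous point of $\C_v$ (or, in a degenerate case where the drop manifests as a kink, a convex point). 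As the $\update_j$ for $j\ge2$ are pairwise distinct, $v$ is visited at most $1$ plus the number of convex points plus the number of discontinuous points of the final $\C_v$.

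It then remains to bound the number of convex and of discontinuous points of a \emph{final} cost profile by $b(\maxCat+1)$ each, independently of the number of iterations; this yields at most $1+2b(\maxCat+1)\le 2(b(\maxCat+1)+1)$ visits per vertex and hence at most $2n(b(\maxCat+1)+1)$ iterations in all. I would prove this by sharpening \cref{lemma:numberOfPointsInCostProfile}: each convex or discontinuous point of a profile inherits a ban interval, so fix a ban interval $I$ and consider the points of the final $\C_v$ carrying label $I$. Tracing their inheritance chains back to the travel-time function of $I$'s edge and using the identity $\C^i_v(t^2_v)-\C^i_v(t^1_v)=\cdriv\cdot(t^2_v-t^1_v)$ from the proof of \cref{lemma:numberOfPointsInCostProfile} (which uses $\cdriv=\cwait[0]$), any two such points with no genuine gradient change between them lie on one linear piece of gradient $\cdriv$, so at most one of them is convex and at most one discontinuous; since every linear piece of a cost profile has one of the $\maxCat+1$ gradients $\cwait[0]>\cwait[1]>\cdots>\cwait[\maxCat]$, at most $\maxCat+1$ of the label-$I$ points survive as convex points and at most $\maxCat+1$ as discontinuous points, and summing over the $b$ ban intervals finishes the bound. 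The step I expect to be the main obstacle is exactly this last one: \cref{lemma:numberOfPointsInCostProfile} only gives $ib$ breakpoints after iteration $i$, which would make the argument circular, so one genuinely needs an iteration-independent $O(b\maxCat)$ bound on the breakpoints of a final profile, and making the ``collapse onto one linear piece'' precise for points born in different iterations — while correctly accounting for the concave points and for the slope-reducing waiting step at $v$ — is the delicate part.
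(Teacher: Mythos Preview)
Your charging argument in the third paragraph contains a genuine gap. You claim that each re-visit at $\update_j$ forces the final profile $\C_v$ to have a downward jump (or at worst a convex kink) at $\update_j$, because the left limit is the old value and the new value is strictly smaller. But this relies on $\C'_v(\update_j) < \C_v(\update_j)$ being \emph{strict} at the key $\update_j$. When $\C'_v$ crosses $\C_v$ continuously from above---both profiles linear and continuous near the crossing point $\tau$, with the slope of $\C'_v$ strictly smaller than that of $\C_v$---the infimum of the improvement set is $\tau$ and $\C'_v(\tau)=\C_v(\tau)$. The merged profile then switches from the old slope to the smaller slope of $\C'_v$ at $\tau$: this is a \emph{concave} corner, not a discontinuity and not convex. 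The paper's own proof explicitly isolates this case (``if $\C'_v(t^*-1) > \C_v(t^*-1)$ then there is a concave point between $t^*-1$ and $t^*$'') and charges such visits separately, using that at most $\maxCat$ concave points can sit between two consecutive convex/discontinuous points; this is where the factor $(\maxCat+1)$ in the bound comes from. Your argument, as written, simply misses all of these visits.

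Your final paragraph, which you already flag as the delicate part, is also not a proof. The identity from \cref{lemma:numberOfPointsInCostProfile} says that two points traced back to the \emph{same ban interval and the same iteration} lie on a common line of slope $\cdriv$. You want to drop the ``same iteration'' hypothesis and conclude that among all label-$I$ points of the final profile at most $\maxCat+1$ are convex (resp.\ discontinuous). But the only thing the identity gives you for two label-$I$ points born in different iterations is that they sit on a line of slope $\cdriv$; since the profile between them can dip below that line with any of the $\maxCat+1$ slopes and come back up repeatedly, there is no a priori reason why only $\maxCat+1$ such points survive. The paper avoids this circularity entirely: instead of bounding the breakpoints of the final profile and then charging visits to them, it assigns each visit directly either to a ban-interval start, a ban-interval end, the first finite point of the profile, or to one of the at most $\maxCat$ concave points between two consecutive ban-interval-assigned visits, and shows each start/end is used at most once per vertex.
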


As in the proof of Lemma~\ref{lemma:numberOfPointsInCostProfile} we use
the ban interval assignment of convex and discontinuous points. Every visit of a
vertex can either be assigned to the start or end of a ban interval, or it can be
assigned to a concave point of the final cost profile of the vertex. The detailed
proof is omitted due to space limitations and can be found in
Appendix~\ref{sec:poly_iterations}.

\begin{theorem}
If $\cdriv = \cwait[0]$ then the running time of the algorithm is polynomial.
\end{theorem}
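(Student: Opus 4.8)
The plan is to combine the two preceding lemmas with a bound on the work done in a single iteration and a bound on the finalization phase.

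First I would bound the sizes of the involved profiles. A travel time function $\T_\edge$ has only $O(1)$ non-differentiable points per ban interval of $\edge$ --- the convex, concave, and discontinuous point associated with that interval as described above --- so $\sum_{\edge \in \edges} |\T_\edge| = O(b)$ in total. By \cref{lemma:numberOfPointsInCostProfile}, a cost profile after iteration $i$ has at most $ib$ convex and at most $ib$ discontinuous points; moreover, between two consecutive points that are convex or discontinuous the gradients of the linear pieces strictly increase through values in $\{\cwait, \ldots, \cwait[\maxCat]\}$, so at most $\maxCat$ of the points in between can be concave. Writing $N$ for the total number of iterations, every cost profile therefore has $O(Nb\maxCat)$ non-differentiable points.

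Next I would charge the work of the main loop. By \cref{lemma:iterations}, $N \le 2n(b(\maxCat+1)+1)$, which is polynomial. In the iteration visiting a vertex $\vertexa$, the algorithm performs, for each outgoing edge $\edge=(\vertexa,\vertexb)$, the linking of \cref{eqn:step1}, the waiting step of \cref{eqn:step2}, and the merge of $\C'_\vertexb$ into $\C_\vertexb$; each of these runs in time linear in the number of non-differentiable points of $\T_\edge$, $\C_\vertexa$, $\C'_\vertexb$, and $\C_\vertexb$, i.e.\ in $O(Nb\maxCat)$ time. Summing over the outgoing edges of $\vertexa$ and over all $N$ iterations gives $O(N^2 m b \maxCat)$ time, which is polynomial in $n$, $m$, $b$, and $\maxCat$; the $O(N)$ priority queue operations at $O(\log n)$ each are negligible.

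Finally I would handle the finalization phase. The number of Pareto-optimal time--cost pairs at $\target$ is at most the number of non-differentiable points of $\C_\target$, hence $O(Nb\maxCat)$. For a fixed such pair, the backward extraction prepends exactly one edge per step; the arrival times along the extracted route strictly increase (driving times are positive), and whenever a step lands at a vertex $\vertexa$, the arrival time assigned there is the earliest time at which the line of slope $\cwait[{\categ(\vertexa)}]$ through the current point meets $\C_\vertexa$ --- which forces it to be one of the $O(Nb\maxCat)$ non-differentiable points of $\C_\vertexa$, or the left end of its domain. Since the strictly increasing arrival times forbid any (vertex, arrival time) pair from recurring, the extracted route has $O(nNb\maxCat)$ edges and is produced in polynomial time; doing this for every Pareto-optimal pair keeps the total polynomial. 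Adding up the three parts proves the theorem.

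I expect the last part to require the most care: the per-iteration cost and the iteration count are essentially delivered by the two lemmas, but the naive bound on the length of an extracted route is merely the size of the planning horizon $\Hmax - \Hmin$, which is not polynomial in the input size; avoiding this pseudo-polynomial bound relies on the observation that the extraction can only ever stop at a non-differentiable point of a cost profile.
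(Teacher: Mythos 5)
Your proof is correct and follows essentially the same approach as the paper: combine Lemma~\ref{lemma:numberOfPointsInCostProfile} and Lemma~\ref{lemma:iterations} to get polynomially many iterations and polynomially sized profiles, and charge each iteration's link/merge work linearly to the profile sizes. Your additional analysis of the finalization phase (bounding the route length via non-differentiable points rather than the pseudo-polynomial horizon length) is sound and actually covers a step the paper's proof passes over in silence.
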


\begin{proof}
From Lemma~\ref{lemma:numberOfPointsInCostProfile} with the bound from
Lemma~\ref{lemma:iterations} it follows that the number of pieces of any profile
that is constructed during the algorithm is polynomial.

We now estimate the overall running time of our algorithm:
Lemma~\ref{lemma:iterations} states that the total number of iterations is
polynomial.
In every iteration of the algorithm one vertex is considered and for its
outgoing edges the profiles are updated with a running time linear in
the number of pieces of the profiles.
The adjacent vertices are inserted into the priority queue or their keys are
decreased. Since the size of the priority queue is at most the total number
of vertices also the running time of the priority queue operations is
polynomial.
\end{proof}

\section{Implementation}
\label{sec:impl}

The past decade has seen a lot of research effort on the engineering of efficient route planning algorithms.
This section describes the speed-up techniques we employ in our implementation and some implementation details.

We store cost profiles as a sorted list of pieces.
Each piece is represented as a triple: a point in time from which this piece is valid, the costs it takes to reach the vertex at the beginning of the piece and the incline of the piece.
For each piece we also store a parent vertex.
This allows us to efficiently reconstruct routes by traversing the parent pointers.

We employ A* to guide the search toward the destination vertex.
The queue is ordered by the original key plus an estimate of the remaining distance (here: driving time) to the destination.
The estimate for vertex $\vertexa$ is denoted by $\pi_\target(\vertexa)$.
We use the exact shortest driving time to $\target$ without driving restrictions as the potential.
This is the best possible potential in our case.
We efficiently extract these exact distances from a Contraction Hierarchy \cite{gssv-erlrn-12}, as described in \cite{strasser2019perfect}.
Since our algorithm has to run until the queue is empty, we can not immediately terminate when we reach the destination.
However, we get a tentative cost profile at the destination.
This allows for effective pruning.
Additionally, we do not need to insert a vertex $\vertexa$ into the queue when $\update + \pi_\target(\vertexa) > \Hmax$ holds, that is, we cannot reach the destination from $\vertexa$ within the planning horizon.

We employ pruning to avoid linking and merging when possible using the following rules:
\begin{itemize}
	\item Consider a vertex $\vertexa$  that is visited at $\update$.
		Before relaxing any outgoing edges, we first check if $\vertexa$ can actually contribute to any optimal route to $\target$.
		If $\C_\vertexa(\atime) + \pi_\target(\vertexa) \cdot \cdriv > \C_\target(\atime + \pi_\target(\vertexa))$ for all $\atime$ with $\update \leq \atime < \Hmax$, $\vertexa$ can not contribute to an optimal route to $\target$ and can thus be skipped.
	\item Let $\alpha(\vertexa) := \min\{\atime \mid \C_\vertexa(\atime) < \infty\}$ be the first point in time such that $\vertexa$ can be reached with finite costs and $\infty$ if no such point exists.
    For each vertex $\vertexa$, we maintain a lower bound $\beta(\vertexa) := \min_\atime\{\C_\vertexa(\atime)\}$ and an upper bound $\gamma(\vertexa) := \max_{\atime > \alpha(\vertexa)}\{\C_\vertexa(\atime)\}$ or $\infty$, if there are no finite costs.
    They can be updated efficiently during the merge operation.
    An edge $(\vertexa, \vertexb)$ only needs to be relaxed if $\beta(\vertexa) + \drive(\vertexa, \vertexb) \cdot \cdriv \leq \gamma(\vertexb)$ or $\alpha(\vertexa) + \drive(\vertexa, \vertexb) < \alpha(\vertexb)$.
	\item When all of the pieces of the cost profile of a vertex $\vertexa$ share the same parent vertex $\vertexb$ and $\categ(\vertexa) = 0$, the edge $(\vertexa, \vertexb)$ back to the parent does not need to be relaxed as loops can never be part of an optimal route unless they include waiting at a parking location.
\end{itemize}

\section{Experimental Evaluation}
\label{sec:exp}

Our algorithm is implemented in C++14 and compiled with Visual C++.
For the CH-potentials, we build upon the Contraction Hierarchy implementation of RoutingKit\footnote{\url{https://github.com/RoutingKit/RoutingKit}}~\cite{dsw-cch-15}.
All experiments were conducted on a Windows 10 Pro machine with an Intel i7-7600 CPU with a base frequency of 3.4\,GHz and 32\,GB of DDR4 RAM.
The implementation is single-threaded.

Our experimental setup is taken from~\cite{b-rptrc-18}.
We perform experiments on a road network used in production by PTV\footnote{\url{https://ptvgroup.com}}.
The network is adapted from data by TomTom\footnote{\url{https://tomtom.com}}.
It covers Austria, France, Germany, Italy, Liechtenstein, Luxembourg, and Switzerland.
It has 21.9 million vertices and 47.6 million edges.
We use travel times, driving bans, and road closures for a truck with a gross combined weight of 40~tons.
Driving bans were derived from the current legislation of the respective countries.
This includes Sunday driving bans in all countries, a late Saturday driving ban in Austria and night driving bans in Austria, Liechtenstein and Switzerland.
Additionally, there is a Saturday driving ban in Italy during the summer holidays.
The dataset also includes several local road closures in city centers.

Parking locations were taken from data by Truck Parking Europe\footnote{\url{https://truckparkingeurope.com}}.
There is a total of 15\,317 vertices classified as parking locations in our data set.
The dataset also contains the capacity of each parking location.
We assign each parking location a rating between 1 and 5 depending on its capacity.
Table~\ref{tab:parking_lots} shows the number of parking locations for each rating and our default waiting costs.
We also evaluate different parameterizations. 
The waiting costs are calculated such that for an hour of waiting a detour of up to four minutes will be taken to get to a parking location rated better by one.
For waiting at the source vertex of a query, we assign zero waiting costs regardless of the rating.

\begin{table}
\centering
\caption{Rating and default waiting cost by capacity of parking locations. The driving cost is the same as the cost for waiting at unrated vertices.}\label{tab:parking_lots}
\begin{tabular}{lrrrrrr}
\toprule
Capacity of parking locations & $\geq 80$ & $\geq 40$ & $\geq 15$ & $\geq 5$ &  $\geq 1$ &       - \\
\midrule
Rating                        &         5 &      4 &      3 &      2 &      1 &       0 \\
Default waiting costs         &         3 &      4 &      5 &      6 &      7 &      14 \\
Number of parking locations   &       448 &    997 & 2\,664 & 5\,418 & 5\,748 & 21.9\,M \\
\bottomrule
\end{tabular}
\end{table}

We generate two sets of source-destination pairs and combine them with different planning horizons.
The first set is used to evaluate the practicality of our model.
It is designed to make the algorithm cope with the night driving ban in Austria and Switzerland.
We select 100 pairs of vertices.
One vertex is randomly selected from the area around southern Germany.
The other vertex is selected from the area around northern Italy.
See Figure~\ref{fig:geofence_vis} for exact coordinates and a visualization.
We store each pair in both directions.
Hence, we have 200 vertex pairs in this set.
The planning horizon starts at Monday 2018/7/2, 18:00 with length one day (query set A1) and two days (A2).
Figure~\ref{fig:result_example} depicts an example query from A1.

The second set is generated by selecting 100 source vertices uniformly at random.
From each source vertex, we run Dijkstra's algorithm without a specific target ignoring any driving restrictions.
Dijkstra's algorithm explores the graph by traversing vertices in increasing distance of the source vertex.
We use the order in which vertices are settled to select destination vertices with different distances from the source.
Every $2^i$th settled vertex with $i \in [12,24]$ is stored.
We denote $i$ as the \emph{rank} of the query.
This results in 1\,300 source-destination pairs.
We combine these vertex pairs with four planning horizons: starting at Friday 2018/7/6, 06:00 for one day (denoted as query set B1), for two days (B2) and starting later that day at 18:00 for one day (B3) and for two days (B4).

\begin{figure}
\centering
\includegraphics[width=.5\textwidth]{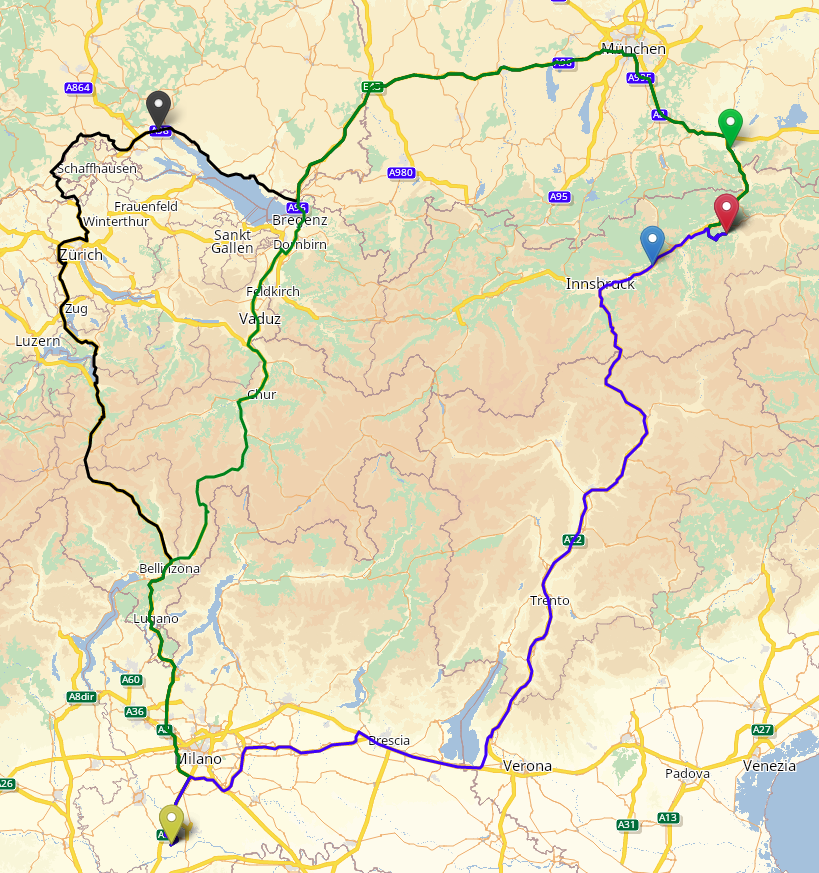}
\caption{
Optimal paths of an example query from northwestern Austria to northern Italy, slightly south of Milano.
The source is indicated by a red, the destination by a yellow marker.
The other markers indicate the parking locations along the respective routes.
The blue route in the east has the shortest driving time, around 10.5 hours, but the latest arrival.
It schedules a waiting time of seven hours during the night driving ban at a parking location of rating 4 and afterwards takes the fastest route to the destination.
The green route in the middle arrives an hour earlier at the destination but the driving time is over two hours longer.
This route includes three hours of waiting at a parking location of rating 5.
The black route in the west takes 16 hours to drive, includes only a few minutes of waiting and arrives six minutes before the green one.
}
\label{fig:result_example}
\end{figure}

\begin{table}
\centering
\caption{
Query statistics for different waiting cost parameters for query set A1.
The first six columns show the waiting cost parameters.
Waiting costs at the source are always set to zero.
The waiting time columns depict the share of the time spent waiting at vertices with the respective rating summed up over all routes.
The routes column gives the average number of optimal routes per query.
The arrival time deviation column contains the average of the difference between earliest and latest arrival time among all optimal routes for all queries.
Running times are also averaged.
}\label{tab:cost_params}
\begin{tabular}{r@{\hskip4pt}r@{\hskip4pt}r@{\hskip4pt}r@{\hskip3pt}r@{\hskip2pt}rr@{\hskip5pt}r@{\hskip5pt}r@{\hskip5pt}r@{\hskip5pt}r@{\hskip5pt}r@{\hskip5pt}r@{\hskip5pt}r@{\hskip5pt}r@{\hskip5pt}r@{\hskip5pt}r}
\toprule
{} & {} & {} & {} & {} & {} & {} & {} & {} & {} & {} & {} & {} & Optimal & Arrival time & Running \\
{} & {} & {} & {} & {} & {} & \multicolumn{7}{c}{Waiting time by rating [\%]} & Routes & deviation & time \\ \cmidrule(lr){7-13}
$\cwait[5]$ & $\cwait[4]$ & $\cwait[3]$ & $\cwait[2]$ & $\cwait[1]$ & $\cwait[0] = \cdriv$ & $\source$ & 5 & 4 & 3 & 2 & 1 & 0 & [\#] & [h:mm] & [ms] \\
\midrule
1  & 10 & 50 & 100 & 1000 & 10000 & 59.4 & 2.5 & 5.8 & 23.3 & 3.1 & 2.8 & 3.1 & 3.02 & 2:21 & 364.1 \\
1  & 2  & 4  & 8   & 16   & 128   & 62.2 & 3.5 & 6.5 & 19.8 & 2.1 & 2.8 & 3.1 & 3.02 & 2:20 & 412.3 \\
1  & 2  & 4  & 8   & 16   & 32    & 70.8 & 6.0 & 5.1 & 12.1 & 1.0 & 1.9 & 3.1 & 2.96 & 2:20 & 435.4 \\
3  & 4  & 5  & 6   & 7    & 14    & 79.3 & 6.2 & 3.1 &  4.6 & 1.5 & 2.0 & 3.3 & 2.86 & 2:17 & 529.4 \\
16 & 24 & 28 & 30  & 31   & 32    & 85.2 & 4.6 & 1.1 &  3.3 & 1.1 & 1.1 & 3.6 & 2.71 & 2:14 & 742.2 \\
\bottomrule
\end{tabular}
\end{table}

We first investigate whether allowing waiting everywhere (albeit penalized) may lead to unwanted results in practice.
On the one hand, routes with many stops are impractical.
Our experiments indicate that this is not the case:
Accross all routes for A1, there is at most one additional stop scheduled (0.2 on average).
On the other hand, let us call a route \emph{precarious} if waiting is scheduled at an unrated location (other than the source vertex).
For 187 of the 200 queries of A1, there is no precarious route in the Pareto set.
For the other 13 queries, the Pareto set always contains more than one route, and it is always only the quickest route in the Pareto set that is precarious.
So filtering out such routes in a postprocessing step does not make a query infeasible.
On average, the second quickest route in the Pareto set arrives 422\,s later than the quickest but precarious route (minimum 38\,s, maximum 877\,s).

We also evaluate the influence of different waiting cost parameterizations on the performance and the results of our algorithm.
Table~\ref{tab:cost_params} depicts the results.
We observe that the parametrization has only limited influence on the results of the algorithm.
The average number of optimal routes and the arrival time deviation change only very little even between the two most extreme configurations.
Since waiting at the source vertex costs nothing, the majority of the waiting in all configurations is scheduled there.
When waiting at parking locations is much cheaper than driving, less waiting time will be scheduled at the source and more waiting at parking locations.
Also, clear differences between the costs lead to a better running time, because cost profiles become less complex.

\begin{table}
\centering
\caption{Query statistics for all six query sets.
First, for all queries. Second, only for non-trivial queries.
A query is denoted as trivial if there is exactly one optimal route which is also optimal when ignoring all driving restrictions.
All numbers are averages unless reported otherwise.
The arrival time deviation column contains the average of the difference between earliest and latest arrival time among all optimal routes for all queries.
The routes column contains the number of optimal routes.}\label{tab:perf}
\begin{tabular}{r@{\hskip4pt}r@{\hskip4pt}lrrrrrr}
\toprule
&     &                   & Query & Optimal & Arrival time & \multicolumn{2}{c}{Running time} \\ \cmidrule(lr){7-8}
&     &                   & share &  Routes &    deviation &      Avg. &               Median \\
& Set & Planning horizon  &  [\%] &    [\#] &       [h:mm] &      [ms] &                 [ms] \\
\midrule
& A1 & Mon. 18:00, 1 day  & 100.0 &    2.86 &         2:17 &     529.4 &                266.3 \\
& A2 & Mon. 18:00, 2 days & 100.0 &    3.54 &         3:19 &     648.1 &                405.6 \\
& B1 & Fri. 06:00, 1 day  & 100.0 &    1.04 &         0:10 &      10.0 &                  0.6 \\
& B2 & Fri. 06:00, 2 days & 100.0 &    1.08 &         0:16 &      79.5 &                  0.7 \\
& B3 & Fri. 18:00, 1 day  & 100.0 &    1.13 &         0:08 &     205.8 &                  0.6 \\
& B4 & Fri. 18:00, 2 days & 100.0 &    1.32 &         0:20 &  1\,028.1 &                  0.7 \\
\midrule
\parbox[t]{3mm}{\multirow{6}{*}{\rotatebox[origin=c]{90}{Only non-trivial}}}
& A1 & Mon. 18:00, 1 day  &  67.5 &    3.82 &         3:13 &     764.1 &                560.6 \\
& A2 & Mon. 18:00, 2 days &  72.0 &    4.53 &         4:37 &     899.2 &                655.0 \\
& B1 & Fri. 06:00, 1 day  &   4.1 &    2.19 &         4:10 &      42.5 &                  6.6 \\
& B2 & Fri. 06:00, 2 days &   4.8 &    2.76 &         5:43 &  1\,105.6 &                 35.8 \\
& B3 & Fri. 18:00, 1 day  &   9.2 &    2.73 &         1:25 &  1\,359.0 &                475.2 \\
& B4 & Fri. 18:00, 2 days &  11.6 &    3.79 &         2:51 &  5\,819.4 &             1\,947.2 \\
\bottomrule
\end{tabular}
\end{table}

We next investigate the algorithm's performance for each of the different query sets.
We report the same numbers limited to non-trivial queries.
A query is denoted as \emph{trivial} if there is exactly one optimal route which is also optimal when ignoring all driving restrictions.
Table~\ref{tab:perf} depicts the results.
Clearly, the query set has a strong influence on the running time of the algorithm.
Average running times range from ten milliseconds to one second when looking at all queries.
However, median query times are significantly smaller.
The reason for this is that our algorithm can answer trivial queries in a few milliseconds or less.
Due to the perfect potentials, the algorithm only traverses the optimal path.
Once the destination is reached, because of the target pruning, all other vertices in the queue are skipped and the algorithm terminates.
Excluding trivial queries, we get a clearer picture of the algorithm's performance when solving the harder part of the problem.

For the query sets B1 and B2, only 4\% to 5\% of the queries have to deal with driving restrictions.
This is mostly due to closures for individual roads in certain cities and not country-wide driving bans.
When the planning horizon begins later at 18:00 (B3 and B4), we get around twice as many non-trivial queries.
These are primarily caused by the night driving bans in Austria and Switzerland.
Road closures and country-wide driving bans lead to different optimal routes.
When there is a road closure on the shortest path ignoring any driving restrictions, we often have two optimal routes.
One which takes a (small) detour around the closure, and one waiting at the source until the closed road opens and then taking that slightly shorter path.
Thus, we have two routes with very similar driving times but (often vastly) diverging arrival times.
When dealing with night driving bans, we get more optimal results with different trade-offs as in the example of Figure~\ref{fig:result_example}.

Increasing the length of the planning horizon to two days leads to more non-trivial queries, more optimal routes per query, and a greater deviation in arrival time.
The reason are routes with a travel time longer than 24 hours which were not valid for the shorter planning horizon.

Even when we restrict ourselves to queries with non-trivial results, running times still vary depending on the query set.
Average and median deviate not as strong as when considering all queries, but the distribution of running times is still skewed by a few long running queries, especially on set B4.
The reason for this is that the running time heavily depends on the types and lengths of driving restrictions in the search space.
The Saturday driving ban in Italy causes heavy outliers in B4 (but also B2 and B3), when the destination lies in an area blocked for most of the planning horizon.
This causes the algorithm to explore large parts of the graph, until the driving ban is over.
The worst of these queries took 49 seconds to answer.
Nevertheless, when looking at query sets A1 and A2, we clearly see that the algorithm can answer queries affected by country-wide night driving bans in less than a second.

\section{Conclusion}
\label{sec:conclusion}

We have introduced a variant of the shortest path problem where driving on edges may be forbidden at times, both driving and waiting entail costs, and the cost for waiting depends on the rating of the respective location.
The objective is to find a Pareto set of both quickest paths and minimum cost paths in a road graph.
We have presented an exact algorithm for this problem and shown that it runs in polynomial time if the cost for driving is the same as for waiting in an unrated location.
With this algorithm, we can solve routing problems that arise in practice in the context of temporary driving bans for trucks as well as temporary closures of roads or even larger parts of the road network.

Our experiments demonstrate that our implementation can answer queries with realistic driving restrictions in less than a second on average.
There are a few slow outlier queries when the destination vertex lies in a blocked area.
A promising angle to improve this could be to study bidirectional variants of our algorithm.
We exploit Contraction Hierarchies to efficiently obtain good A* potentials.
The algorithm can also be used in a dynamic (or live or online) scenario when combined with Customizable Contraction Hierarchies~\cite{dsw-cch-15}.
A natural extension of our problem at hand is to consider time-dependent driving times or rules for truck drivers that enforce a break after a certain accumulated driving time.

\appendix

\section{Driving Cost Less Than Unrated Waiting Cost}
\label{sec:appendixA}

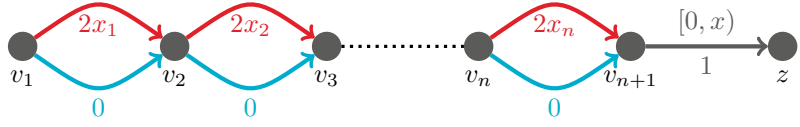
\begin{figure}
    \begin{center}
        \begin{tikzpicture}

    \node[circle,draw,fill,grau](v1) at (0,0){};
    \node[circle,draw,fill,grau](v2) at (2,0){};
    \node[circle,draw,fill,grau](v3) at (4,0){};
    \node[circle,draw,fill,grau](vn) at (6,0){};
    \node[circle,draw,fill,grau](vnplus1) at (8,0){};
    \node[circle,draw,fill,grau](z) at (10,0){};

    \node[below] at (v1.south) {$v_1$};
    \node[below] at (v2.south) {$v_2$};
    \node[below] at (v3.south) {$v_3$};
    \node[below] at (vn.south) {$v_n$};
    \node[below] at (vnplus1.south) {$v_{n+1}$};
    \node[below] at (z.south) {$z$};
    
    \draw[->,rot]  (v1) .. controls +(1,0.7) .. (v2) node [below,midway] {$2x_1$} ;
    \draw[->,blau] (v1) .. controls +(1,-0.7) .. (v2) node [below,midway] {$0$} ;

    \draw[->,rot]  (v2) .. controls +(1,0.7) .. (v3) node [below,midway] {$2x_2$} ;
    \draw[->,blau] (v2) .. controls +(1,-0.7) .. (v3) node [below,midway] {$0$} ;

    \draw[dotted, very thick] (v3) -- (vn);
 
    \draw[->,rot]  (vn) .. controls +(1,0.7) .. (vnplus1) node [below,midway] {$2x_n$} ;
    \draw[->,blau] (vn) .. controls +(1,-0.7) .. (vnplus1) node [below,midway] {$0$} ;
   
    \draw[->,grau]  (vnplus1) -- (z) 
        node [below,midway] {$1$} 
        node [above,midway] {$[0,x)$} ;

\end{tikzpicture}
    \end{center}
	\caption{Transformation of a PARTITION instance consisting of $n$ numbers $x_i$ into a road
            graph with temporary driving bans. The last edge is closed before $x :=  \sum_{i=1}^n x_i $.
            All vertices have rating 0. The graph has parallel edges and edges with driving time $0$
            for the sake of convenience. This can be avoided by replacing each lower blue edge by
            two edges with driving time $1$, adding $2$ to the driving time of each upper red edge,
            and adding $2n$ to $x$.}
    \label{fig:graphPartition}
\end{figure}

\nphard*

\begin{proof}
It can be verified in polynomial time if a route is feasible and has travel time costs less than or equal to $k$,
so the problem is in $\NP$. To show the $\NP$-completeness we reduce from PARTITION:
Given a set of $n$ numbers $x_i$, we construct in polynomial time a road graph with time windows
and costs as shown in Figure~\ref{fig:graphPartition}. The only time window is on the last edge
which is closed up to time $x :=  \sum_{i=1}^n x_i$. Let $\cdriv < \cwait$ and set the threshold
for the travel time cost to $k:=\cdriv(x + 1)$.

If there is a partition of the $x_i$ into two subsets $S_1$ and $S_2$ with the same sum $x/2$ then there
is a route with travel time costs $k$: For $(v_i, v_{i+1})$ select the upper red edge with driving time $2x_i$
if $x_i \in S_1$, and the lower blue edge with driving time $0$ if $x_i \in S_2$. Without waiting this
route arrives exactly at $v_{n+1}$ at time $x$ and hence arrives at time $x+1$ at the destination $z$ with
travel time cost $\cdriv(x + 1) = k$.

On the other hand, if there is a route with travel time cost $c \le k$, there is
a valid partition of the $x_i$: Since the last edge is traversable not earlier than
time $x$ and $\cdriv < \cwait$, any waiting time on the route implies $c > \cdriv (x+1) = k$,
so there can be no waiting time included in the route, $c = \cdriv (x+1)$, and the driving time of
the route is $x+1$. Set $S_1$ to the $x_i$ of all upper red edges of the route and $S_2$ to the
remaining $x_i$. The last edge has driving time 1, so the driving time from $v_1$ to $v_{n+1}$
equals $x$ and consists solely of the driving times of the upper red edges in the route. We
conclude that $\sum_{S_1} 2 x_i = x$, which implies
$\sum_{S_1} x_i = (\sum_{i=1}^n x_i)/2 = \sum_{S_2} x_i$.
\end{proof}

\begin{figure}
    \begin{center}
        \begin{tikzpicture}

    \node[circle,draw,fill,grau](v1) at (0,0){};
    \node[circle,draw,fill,grau](v2) at (2,0){};
    \node[circle,draw,fill,grau](v3) at (4,0){};
    \node[circle,draw,fill,grau](v4) at (6,0){};
    \node[circle,draw,fill,grau](vi) at (8,0){};
    \node[circle,draw,fill,grau](vi1) at (10,0){};
    \node[circle,draw,fill,grau](vnplus1) at (12,0){};

    \node[below] at (v1.south) {$v_1$};
    \node[below] at (v2.south) {$v_2$};
    \node[below] at (v3.south) {$v_3$};
    \node[below] at (v4.south) {$v_4$};
    \node[below] at (vi.south) {$v_i$};
    \node[below] at (vi1.south) {$v_{i+1}$};
    \node[below] at (vnplus1.south) {$v_{n+1}$};
    
    \draw[->,rot]  (v1) .. controls +(1,0.7) .. (v2) 
        node [below,midway] {$2$} 
        node [above,midway]{$[1, 1 + x)$} ;
    \draw[->,blau] (v1) .. controls +(1,-0.7) .. (v2) node [below,midway] {$x$} ;

    \draw[->,rot]  (v2) .. controls +(1,0.7) .. (v3) 
        node [below,midway] {$4$} 
        node [above,midway,text width=2cm,align=center]{ $[3 + x$, \\ $3 + 3x)$ } ;
    \draw[->,blau] (v2) .. controls +(1,-0.7) .. (v3) node [below,midway] {$2x$} ;

    \draw[->,rot]  (v3) .. controls +(1,0.7) .. (v4) 
        node [below,midway] {$8$} 
        node [above,midway,text width=2cm,align=center]{$[7 + 3x$, $7 + 7x)$} ;
    \draw[->,blau] (v3) .. controls +(1,-0.7) .. (v4) node [below,midway] {$4x$} ;

    \draw[dotted, very thick] (v4) -- (vi);
    
    \draw[->,rot]  (vi) .. controls +(1,0.7) .. (vi1) 
        node [below,midway] {$2^{i}$} 
        node [above,midway]{$[b_i, b_i + 2^{i-1}x)$} ;
    \draw[->,blau] (vi) .. controls +(1,-0.7) .. (vi1) node [below,midway] {$2^{i-1}x$} ;

    \draw[dotted, very thick] (vi1) -- (vnplus1);

\end{tikzpicture}
    \end{center}
	\caption{A graph with exponentially many Pareto-optimal routes if $\cdriv > \cwait[0]$. The ban
	    interval of an upper red edge $(v_i, v_{i+1})$ begins at $b_i:=(2^i-1)+(2^{i-1}-1) x$.
	    There is no parking location in this graph, so all vertices have rating 0. The graph has
	    parallel edges for the sake of convenience, they can be avoided by replacing
	    each lower blue edge by two edges and splitting the driving time.}
    \label{fig:graphExponentialSolutions}
\end{figure}
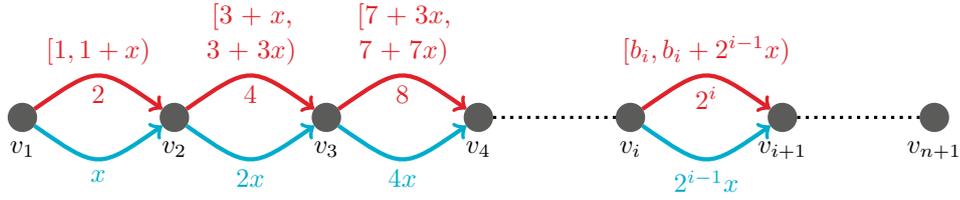

\section{Driving Cost Greater Than Unrated Waiting Cost}
\label{sec:appendixB}

\exproutes*

\begin{proof}
Given $\cdriv > \cwait[0]$, let $x := \left\lceil \frac{2 \cdriv}{\cdriv-\cwait} \right\rceil+1$ be a
possibly large constant. Consider the graph shown in Figure~\ref{fig:graphExponentialSolutions} with
$n+1$ vertices and $2n$ edges.
The graph has no parking locations, so waiting costs are independent of the location. Given a
path, a route with that path which arrives as soon as possible has also minimal travel
time cost.

In order to calculate the earliest possible arrival time of a route with
a given path we assign every $v_1$-$v_{k+1}$ path a number $p \in \NNI$ with $ 0 \le p < 2^k$:
In the binary representation of $p$ beginning with the least significant digit the $i$-th digit is 1 if and only
if in the path the vertices $v_i$ and $v_{i+1}$ are connected by the upper red edge. Let $\overline{p}$ be the
ones' complement of $p$, so $p + \overline{p} = 2^k-1$.
With this representation of $p$ and assuming that for upper red edges always a waiting time for the whole
duration of the ban interval is required, the earliest possible arrival time at $v_{k+1}$ is
\begin{equation}
    \begin{split}\label{eqn:arrivalOfPath}
        \textrm{arrival}(p) = \overline{p} x + p (2 + x) = (\overline{p} + p) x + 2 p =  (2^k-1) x + 2 p \\
    \end{split}
\end{equation}

By induction we show that whenever an upper red edge is used in the path, waiting the whole
ban interval is required: For $k=1$ the path consists of one edge and
the departure from $v_1$ is at time 0. If $p=0$ the edge is the lower blue edge
with driving time $x$ and the arrival at $v_2$ is at time $x$. If $p=1$ the edge is the
upper red edge with driving time 2, and the ban interval starts at time 1 during
traversal of the edge. A waiting time of $x$ is required, and the arrival at $v_2$ is at
time $2 + x$.
Assume now that the proposition holds for $v_1$-$v_{k+1}$-paths with $k$ edges and consider a
$v_1$-$v_{k+2}$-path with $k+1$ edges. If the last edge is the lower blue edge there is
nothing to show, so further assume the last edge connecting $v_{k+1}$ and $v_{k+2}$ is
the upper red edge. According to Equation~\ref{eqn:arrivalOfPath} the arrival at
$v_{k+1}$ is between $(2^k-1) x$ and $(2^k-1) x + 2^{k+1}-2$.
The driving time for $(v_{k+1},v_{k+2})$ is $2^{k+1}$ and the ban interval begins at
$b_{k+1}=(2^{k+1}-1)+(2^k-1) x$. Hence, in all cases during the traversal of the edge the
ban interval begins.

The travel time cost of the route can be calculated as follows.
A lower blue edge $(v_i, v_{i+1})$ has only driving cost of $2^i \cdriv x$.
An upper red edge $(v_i, v_{i+1})$ has driving cost $2^{i+1} \cdriv$ and
waiting cost $2^i \cwait x$:

\begin{equation}
    \begin{split}\label{eqn:costOfPath}
        \textrm{cost}(p) & = \overline{p} \cdriv x + 2 p \cdriv + p \cwait x \\
	                 & = (2^k - 1 - p) \cdriv x  + 2 p \cdriv + p \cwait  x \\
                         & = (2^k - 1) \cdriv x - p \cdriv x + 2 p \cdriv + p \cwait  x \\
			 & = (2^k - 1) \cdriv x + p (2 \cdriv - x (\cdriv - \cwait))
    \end{split}
\end{equation}

Let $p_1$ and $p_2$ with $p_1 < p_2$ be the assigned numbers of two arbitrary $v_1$-$v_{n+1}$-paths
and consider corresponding routes with the earliest possible arrival time and travel time costs
as calculated above.
From Equation~\ref{eqn:arrivalOfPath} it follows directly that
$\textrm{arrival}(p_1) < \textrm{arrival}(p_2)$.  By definition of $x$, we have
$2 \cdriv - x (\cdriv - \cwait) < 0$ and together with
Equation~\ref{eqn:costOfPath} it follows that $\textrm{cost}(p_1) > \textrm{cost}(p_2)$.
This means that each of the $2^n$ routes is Pareto-optimal because compared to any other route
either the arrival time is earlier or the travel time cost is lower.
\end{proof}

\section{Polynomial Number of Iterations}
\label{sec:poly_iterations}

\iterations*

\begin{proof}
Similarly to the proof of Lemma~\ref{lemma:numberOfPointsInCostProfile}, we show that every vertex is visited at most $\mathcal{O}(rb)$ times.
For this, we uniquely assign the elements of the priority queue to the start or end of a ban interval.
When a vertex is inserted into the priority queue the key $t^*$ is the earliest point in time such that $\C'_v(t^*)<\C_v(t^*)$.
Such a $t^*$ always coincides with a non-differentiable point of a cost profile, though this point may not necessarily be included in the final cost profile.
As the number of concave points between two convex or discontinuous points is at most $r$, we only track priority queue elements induced by convex or discontinuous points.
Additionally, there is one first point of each profile without an assignment to a ban interval.

To assign the points, we distinguish several cases depending on the time $t^*-1$:
\begin{itemize}
    \item If $\C_v$ is $\infty$ at $t^*-1$ then $(t^*, \C'_v(t^*))$ is the new first point of the resulting profile at $v$.
          Due to the correctness of the algorithm, the queue element for the previous first point is still in the queue.
          It will be updated and its key decreased.
    \item Otherwise, if $\C'_v(t^*-1) > C'_v(t^*)$ and there is a jump discontinuity in $C'_v(t^*)$ then there is a corresponding ban interval.
          We assign $t^*$ to the end of that corresponding ban interval.
          This also covers that case that $\C'_v(t^*-1) = \infty$.
          In this case a corresponding ban interval has to exist, too.
          For contradiction, assume that no such ban interval exists.
          Thus, waiting is never beneficial and the entire time was spent driving and $\C'_v(t^*) = \cdriv(t^* - \Hmin)$.
          However, by definition $\C'_v(t^*)<\C_v(t^*)$ which is a contradiction since $\cdriv(t^* - \Hmin)$ is the maximum possible cost value at $t^*$.
          Thus, a ban interval has to exist and can be used for the assignment.
    \item Otherwise, if $\C'_v(t^*-1) > \C_v(t^*-1)$ then there is a concave point between $t^*-1$ and $t^*$.
    \item Otherwise, if $\C'_v(t^*-1) = \C_v(t^*-1)$ then $t^*-1$ is either a concave point of $\C'_v$ or a convex point of $\C_v$.
          In the case of a convex point, we assign $t^*$ to the start of the corresponding ban interval.
          Note that the first piece of the resulting profile from $t^*-1$ to $t^*$ has an incline less than $d$.
\end{itemize}
We show by contradiction that a vertex at time $\update$ which is assigned to a specific start or end of a ban interval is visited only once:
Assume a vertex is visited again at $t_2$ with the same assignment as a previous visit at $t_1$.
Due to the correctness of the algorithm all cost profiles up to $t_2$ are correct.
If the points are assigned to the same start of a ban interval, we have $C_v(t_2-1) = C_v(t_1-1) + d(t_2-t_1)$.
This is a contradiction to the remark above that the incline at $t_1-1$ is less than $d$.
If the points are assigned to the same end of a ban interval, we have $C_v(t_2) = C_v(t_1) + d(t_2-t_1)$.
If the incline at $t_1$ was less than $d$, this leads to the same contradiction as in the previous case.
If the incline at $t_1$ was $d$, we distinguish two cases:
Either $t_1$ was inserted into the queue before $t_2$.
This is a contradiction because the cost at $t_2$ was not improved and $t_2$ would not have been inserted.
If $t_2$ was inserted before $t_1$, $t_2$ would have been removed from the queue when $t_1$ was inserted.

We conclude that for one vertex, due to the assignment there are at most $b$ visits assigned to the start of a ban interval and $b$ visits assigned to the end of a ban interval.
Additionally, there is one visit for the first point of the profile and up to $(r+1)$ visits due to concave points between two consecutive visits assigned to ban intervals.
In total, we have at most $2b(r+1)+1$ visits of a vertex.
\end{proof}

\section{Visualization of Query Sets A1 and A2}
\label{sec:appendixC}

The source and destination vertices for the query sets A1 and A2 are drawn from the regions $A$ and $B$ as shown in Figure~\ref{fig:geofence_vis}.
Region $A$ is the area southeast of 49°N 4°E and northwest of 47°N 18°E.
Region $B$ is the area southeast of 46°N 4°E and northwest of 42°N 18°E.
From each region a vertex is drawn.
Queries are generated in both directions.
This setup is taken from~\cite{b-rptrc-18}.

\begin{figure}[ht]
\centering
\begin{tikzpicture}
\node[inner sep=0pt] at (0,0) {\includegraphics[width=\textwidth]{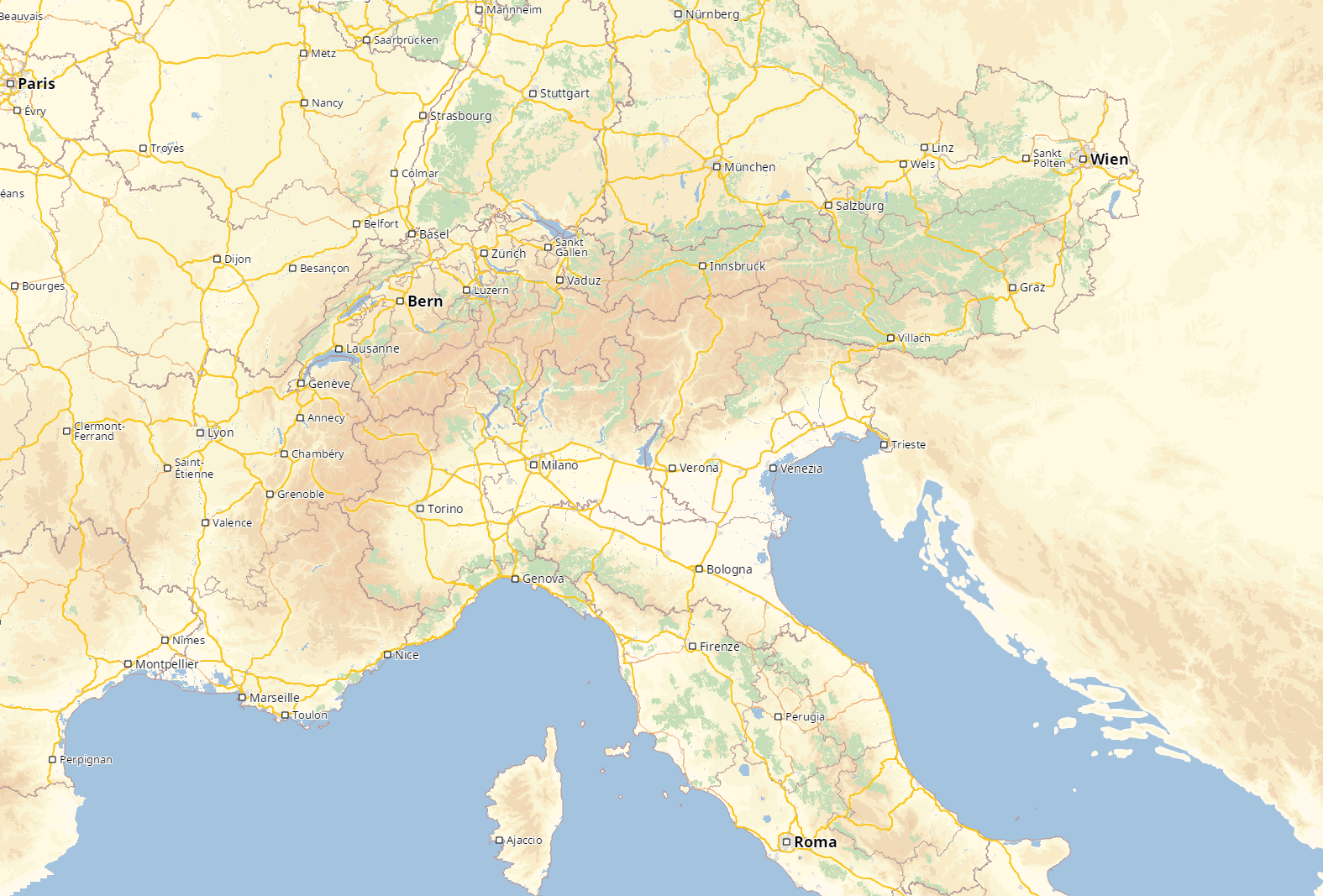}};
\draw[fill=white,opacity=0.45] (-5.95,4.42) rectangle (6.15,1.8);
\node at (0.1,3.095) {\Huge $A$};

\draw[fill=white,opacity=0.45] (-5.95,0.49) rectangle (6.15,-4.35);
\node at (0.1,-1.9) {\Huge $B$};
\end{tikzpicture}
\caption{Source and destination areas of query sets A1 and A2.}
\label{fig:geofence_vis}
\end{figure}

\end{document}